\pgfplotsset{compat=1.18}
\begin{document}
\newcommand\DBMS{{\textsc{DBMS}}}
\newcommand\SDBMS{{\textsc{SDBMS}}}
\newcommand\PostGIS{{{PostGIS}}}
\newcommand\PostgreSQL{{{PostgreSQL}}}
\newcommand\DuckDB{{{DuckDB}}}
\newcommand\DuckDBSpatial{{{DuckDB Spatial}}}
\newcommand\MySQLGIS{{{MySQL}}}
\newcommand\MySQL{{{MySQL}}}
\newcommand\SQLSERVER{{{SQL Server}}}
\newcommand\JTS{{\textsc{JTS}}}
\newcommand\GEOS{{\textsc{GEOS}}}
\newcommand\ToolName{{{Spatter}}}
\newcommand\OGC{\emph{Open Geospatial Consortium Standards}}
\newcommand\OGCAKA{{OGC}}
\newcommand\DENIM{{\textsc{DE-9IM}}}
\newcommand\APPROACHNAME{\emph{Affine Equivalent Inputs}}
\newcommand\APPROACHAKA{\emph{AEI}}
\newcommand\SMARTGENERATORUPPER{{Geometry-Aware Generator}}
\newcommand\SMARTGENERATOR{\emph{geometry-aware generator}}
\newcommand\SMARTGENERATORAKA{\emph{GAG}}
\newcommand\RANDOMGENERATORAKA{\emph{RSG}}
\newcommand\genbasedstrgy{\emph{random-shape strategy}}
\newcommand\derivativestrgy{\emph{derivative strategy}}
\newcommand\EmptyRemoval{\emph{EMPTY removal}}  \newcommand\CapEmptyRemoval{{EMPTY Removal}}
\newcommand\Homogenization{\emph{homogenization}}  \newcommand\CapHomogenization{{Homogenization}} 
\newcommand\DuplicateRemoval{\emph{duplicate removal}} \newcommand\CapDuplicateRemoval{{Duplicate Removal}}
\newcommand\Reordering{\emph{reordering}} \newcommand\CapReordering{{Reordering}}
\newcommand\ConsecutiveDuplicateRemoval{\emph{consecutive duplicate removal}}
\newcommand\CapConsecutiveDuplicateRemoval{Consecutive Duplicate Removal}

\newcommand{\sqlinline}[1]{\lstinline[ language=SQL,breaklines=true, basicstyle=\ttfamily\small\color{sqlinlineColor},keywordstyle={}]{#1}}

\newcommand{\EMPTYPATTERN}{6}
\newcommand{\MIXEDPATTERN}{13}
\newcommand{\OVERLOOKED}{14}
\newcommand{\SDB}{SDB}

\newcommand{\ignorelst}[1]{#1}

\algtext*{EndIf} 
\algtext*{EndWhile} 
\algtext*{EndFor} 
\algtext*{EndFunction}

\definecolor{NavyBlue}{rgb}{0,0,0.5}
\definecolor{lstString}{HTML}{576574}
\definecolor{myGreen}{HTML}{009432}
\definecolor{bugColor}{HTML}{c0392b}
\definecolor{sqlinlineColor}{HTML}{000000}

\newcommand{\bugPattern}[1]{\textcolor{bugColor}{\emph{#1}}}

\renewcommand*\Call[2]{\textproc{\textnormal{#1}}(#2)}

\lstdefinestyle{sqlstyle}{
    language=SQL,
	basicstyle=\small\ttfamily,	
    keywordstyle=\color{NavyBlue},
    stringstyle=\small\color{lstString},
    backgroundcolor=\color{white!5},
	frame=single,
	framerule=0pt,
    rulecolor=\color{black},
	numbers=left,
    numberstyle=\footnotesize,
    firstnumber=1,
    stepnumber=1,
    numbersep=7pt,
	showspaces=false,
	showstringspaces=false,
	keepspaces=true,		
	showtabs=false,
	tabsize=4,
	flexiblecolumns=true,
	breaklines=true,
	breakatwhitespace=false,
	breakautoindent=true,
	breakindent=1em,
	title=\lstname,	
	escapeinside=``,
	xleftmargin=2em,  
    xrightmargin=0em,
	aboveskip=0ex, 
    belowskip=0ex,
    framexleftmargin=4ex,
	framextopmargin=0pt,
    framexbottommargin=0pt,
    abovecaptionskip=0pt,
    belowcaptionskip=2pt,
	extendedchars=false, 
    columns=flexible, mathescape=false,
	texcl=false,
	fontadjust,
    captionpos=t,
}

\captionsetup[lstlisting]{font=small, singlelinecheck=false, justification=raggedright}

\title[Finding Logic Bugs in Spatial Database Engines]{Finding Logic Bugs in Spatial Database Engines \\
via \APPROACHNAME{}}
\author{Wenjing Deng}
\orcid{0009-0004-0433-6860}
\affiliation{%
  \institution{East China Normal University}
  \city{Shanghai}
  \country{China}}
\email{51215902117@stu.ecnu.edu.cn}

\author{Qiuyang Mang}
\orcid{0009-0000-4462-8527}
\affiliation{%
  \institution{The Chinese University of Hong Kong, Shenzhen}
  \city{Guangdong}
  \country{China}}
\email{qiuyangmang@link.cuhk.edu.cn}

\author{Chengyu Zhang}
\orcid{0000-0002-7285-289X}
\affiliation{%
  \institution{ETH Zurich}
  \city{Zurich}
  \country{Switzerland}
}
\email{chengyu.zhang@inf.ethz.ch}

\author{Manuel Rigger}
\orcid{0000-0001-8303-2099}
\affiliation{%
 \institution{National University of Singapore}
 \city{Singapore}
 \country{Singapore}
}
\email{rigger@nus.edu.sg}

\begin{abstract}
Spatial Database Management Systems (\SDBMS{}s) aim to store, manipulate, and retrieve \emph{spatial data}. \SDBMS{}s are employed in various modern applications, such as geographic information systems, computer-aided design tools, and location-based services. However, the presence of logic bugs in \SDBMS{}s can lead to incorrect results, substantially undermining the reliability of these applications. Detecting logic bugs in \SDBMS{}s is challenging due to the lack of ground truth for identifying incorrect results. In this paper, we propose an automated \SMARTGENERATOR{} to generate high-quality SQL statements for \SDBMS{}s and a novel concept named \APPROACHNAME{} (\APPROACHAKA{}) to validate the results of \SDBMS{}s. We implemented them as a tool named \ToolName{} (\underline{Spat}ial \DBMS{}s Tes\underline{ter}) for finding logic bugs in four popular \SDBMS{}s: \PostGIS{}, \DuckDBSpatial{}, \MySQLGIS{}, and \SQLSERVER{}. Our testing campaign detected \DTLfetch{status}{status}{Identified}{Sum} previously unknown and unique bugs in these \SDBMS{}, of which \DTLfetch{status}{status}{Real}{Sum} have been confirmed, and \DTLfetch{status}{status}{Sum}{Fixed} have been already fixed. Our testing efforts have been well appreciated by the developers. Experimental results demonstrate that the \SMARTGENERATOR{} significantly outperforms a naive random-shape generator in detecting unique bugs, and \APPROACHAKA{} can identify \OVERLOOKED{} logic bugs in \SDBMS{}s that were overlooked by previous methodologies.
\end{abstract}

\begin{CCSXML}
<ccs2012>
   <concept>
       <concept_id>10011007.10011074.10011099.10011102.10011103</concept_id>
       <concept_desc>Software and its engineering~Software testing and debugging</concept_desc>
       <concept_significance>500</concept_significance>
       </concept>
   <concept>
       <concept_id>10002951.10002952</concept_id>
       <concept_desc>Information systems~Data management systems</concept_desc>
       <concept_significance>500</concept_significance>
       </concept>
 </ccs2012>
\end{CCSXML}

\ccsdesc[500]{Information systems~Data management systems}
\ccsdesc[500]{Software and its engineering~Software testing and debugging}
\keywords{Spatial query processing, logic bug}

\received{18 April 2024}
\received[revised]{25 July 2024}
\received[accepted]{1 August 2024}

\maketitle

\section{Introduction}
Spatial Database Management Systems (\SDBMS{}s) aim to store, manipulate, and retrieve \emph{spatial data}, which describes objects and locations under a coordinate system~\cite{guting1994introduction}. \SDBMS{}s are employed in various modern applications, such as geographic information systems~\cite{rigaux2002spatial}, computer-aided design~\cite{10.1145/376284.375777}, location-based services~\cite{Zheng2010GeoLifeAC}, and scientific simulations~\cite{10.1145/2463676.2463700, 10.1145/2723372.2749434}.
\SDBMS{}s are typically implemented as spatial extensions or build-in features of widely-used relational \DBMS{}s.
For example, \PostGIS{}, the most popular \SDBMS{} on DB-Engines Ranking, is a spatial extension of \PostgreSQL{}; \MySQL{}, one of the most widely-used relational \DBMS{}s supports spatial built-in features.

Despite the importance of \SDBMS{}s, their reliability has not received sufficient attention.
While general-purpose fuzzers have been applied to \SDBMS{}s for generating obscure inputs that cause crashes~\cite{postgis-oss-fuzz2017}---namely crash bug detection---they have failed to detect logic bugs.
Logic bugs are a particularly notorious category of bugs, causing the \SDBMS{}s to compute incorrect results.
Unlike crash bugs, which terminate the process,
logic bugs silently produce incorrect results and thus often go unnoticed by both developers and application users.

Listing~\ref{lst:precision-issue} shows statements that trigger a logic bug in \PostGIS{}. 
The statements insert a line segment and a point into the database (Line~\ref{line:create-start}--\ref{line:create-end}) and query whether the line segment covers the point (Line~\ref{line:cover-query}).
The line segment and the point are drawn in Figure~\ref{fig:intro_subfig1}, clearly showing that the line covers the point, which means the result should be \sqlinline{1}. 
However, \PostGIS{} incorrectly returned \sqlinline{0}, which indicates a logic bug.

\begin{figure}[t]
\lstset{style=sqlstyle}
\begin{lstlisting}[caption={Statements that trigger a bug in \PostGIS{}. The retrieved value from \PostGIS{} should be 1 instead of 0.}, label={lst:precision-issue}]
CREATE TABLE t1 (g geometry); `\label{line:create-start}`
CREATE TABLE t2 (g geometry);
INSERT INTO t1 (g) VALUES ('LINESTRING(0 1,2 0)');
INSERT INTO t2 (g) VALUES ('POINT(0.2 0.9)'); `\label{line:create-end}`
SELECT COUNT(*) FROM t1 JOIN t2 ON ST_Covers(t1.g,t2.g);`\label{line:cover-query}`
-- {0} `\faBug{}`  {1} `\faCheckCircle{}`
\end{lstlisting}
\end{figure}

\begin{figure}[t]
\lstset{style=sqlstyle}
\begin{lstlisting}[caption={The statements correspond to the geometries in Figure~\ref{fig:intro_subfig2}. }, label={lst:AEI_example}]
CREATE TABLE t1 (g geometry); 
CREATE TABLE t2 (g geometry);
INSERT INTO t1 (g) VALUES ('LINESTRING(1 1,0 0)');
INSERT INTO t2 (g) VALUES ('POINT(0.9 0.9)');
SELECT COUNT(*) FROM t1 JOIN t2 ON ST_Covers(t1.g,t2.g);
-- {1} `\faCheckCircle{}`
\end{lstlisting}
\end{figure}

\begin{figure}[t]
\centering
\subfigure[The line covers the point.]{
\includegraphics[width=0.30\textwidth]{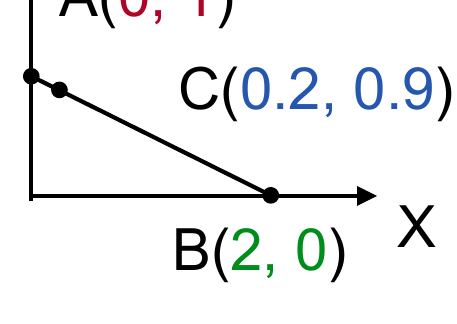}
\label{fig:intro_subfig1}
}
\hspace{15pt}
\subfigure[Geometries affine transformed from left geometries.]{
\includegraphics[width=0.30\textwidth]{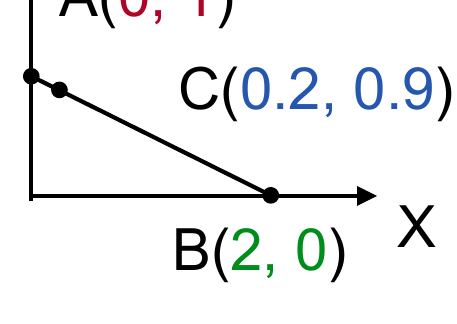}
\label{fig:intro_subfig2}
}
\captionsetup{font={small, bf}}
\caption{Visualizations of the geometries in Listing~\ref{lst:precision-issue} and Listing~\ref{lst:AEI_example} are shown in (a) and (b) respectively. The geometry pairs in (a) and (b) are affine equivalent.}
\label{fig:intro_figures}
\end{figure}

Detecting logic bugs automatically in \SDBMS{}s remains a challenging problem.
To the best of our knowledge, 
in \SDBMS{}s,
past practices largely rely on user reports to identify logic bugs; no automated testing strategies have been applied.  
The key challenge of automatically finding logic bugs in \SDBMS{}s is the lack of ground truth results.
Many automated testing techniques have been proposed for detecting logic bugs in relational \DBMS{}s~\cite{10.1145/3551349.3556924, 10.14778/3357377.3357382, 10.1109/ICSE48619.2023.00174, 10.1145/3368089.3409710, 10.1145/3428279, 10.5555/3488766.3488804, 10.1145/3510003.3510093, 10.1109/ICSE48619.2023.00175},
but unfortunately, they are not applicable for testing \SDBMS{}s, especially for spatial-related features.

One methodology of testing relational \DBMS{} works is to generate the query, pass it to different systems, and consider the equivalence of their outputs as the expected result, a technique known as \emph{differential testing}~\cite{10.1145/3551349.3556924, 10.14778/3357377.3357382}. 
However, for spatial-related features that are solely implemented in one \SDBMS{}, 
differential testing techniques are inapplicable, because the expected result cannot be constructed.
In addition, bugs can also be in the shared third-party libraries, 
leading multiple \SDBMS{}s to produce consistent, but incorrect outputs, 
resulting in bugs being missed. 
Furthermore, although the \OGC{}~(\OGCAKA{}) have standardized most functions and data types in \SDBMS{}s, implementation details still vary among \SDBMS{}s~\cite{piorkowski2011mysql}, which means their outputs are expected to be nonequivalent, resulting in \emph{false alarms}. 
We observed various expected discrepancies between \SDBMS{}s---as per developers' intents.
The bug in Listing~\ref{lst:precision-issue} cannot be detected by differential testing, 
as function \sqlinline{ST_Covers} is only implemented in \PostGIS{} and \DuckDBSpatial{}, relying on their common third-party library. 

Another methodology involves generating statements and constructing their expected results within the same \DBMS{}, thus avoiding the limitations of differential testing. 
For instance, Ternary Logic Partitioning (TLP) is a general state-of-the-art testing technique for relational \DBMS{}s~\cite{10.1145/3428279}, applicable not only to relational \DBMS{}s, but also to graph \DBMS{}s~\cite{10.1145/3597926.3598044}. 
The key insight of TLP is to derive three partitioning queries from an original query, and the sum of the results from the partitioned queries is expected to be the same as that of the original one. 
However, TLP may fail to detect logic bugs in spatial-related features. 
For example, the bug presented in Listing~\ref{lst:precision-issue} cannot be found by TLP because both the summed-up results of the partitioning queries and the original query result are incorrect. 
Therefore, an \SDBMS-oriented approach for generating test cases and expected results is necessary.

We propose a methodology named \APPROACHNAME{} (\APPROACHAKA{}) to provide the expected results for \SDBMS{}s. Our key insight is that if two geometries affine transform (\emph{e.g.}, rotate, scale, and translate) in the same way, \emph{topological relationships} (\emph{e.g.}, intersects, covers, or disjoint) are preserved.  
Therefore, two topological relationships of two \emph{affine equivalent} geometry pairs retrieved from an \SDBMS{} are expected to be equal; otherwise, a bug is detected. 
We consider two geometry pairs to be affine equivalent if they can be transformed into each other through an affine transformation.

For example, considering the geometries in Figure~\ref{fig:intro_subfig1}, affine transforming them yields the geometries in Figure~\ref{fig:intro_subfig2} corresponding to the statements in Listing~\ref{lst:AEI_example}.
Since affine transformations are invertible~\cite{affine-transformation}, the geometry pairs in Figure~\ref{fig:intro_subfig1} and Figure~\ref{fig:intro_subfig2} are deemed \emph{affine equivalent}. 
With the expectation of equal topological relationships, the statements in \text{Listing \ref{lst:precision-issue} and \ref{lst:AEI_example}} should yield the same results. 
However, \PostGIS{} correctly evaluates \sqlinline{1} for the statements in Listing~\ref{lst:AEI_example} and incorrectly returns \sqlinline{0} for the statements in Listing~\ref{lst:precision-issue}, revealing a logic bug. 
We found this bug, which was caused by precision issues, via \APPROACHAKA{}. 
Specifically, the bug was caused by a loss of precision in the normalization of vertices (\emph{i.e.}, displacement to the origin). 
The statements in Listing~\ref{lst:AEI_example} fail to trigger this bug, because a point in $v$ is at the origin, leading to no displacement calculations. 
Overall, in this example, \APPROACHAKA{} reveals the bug by triggering different execution paths (\emph{i.e.}, with/without displacement calculations). 
As demonstrated through the above example, such precision issues may result in inaccuracies in topological relationship calculations.
The developers of \PostGIS{} have acknowledged this bug and have highlighted their concern by modifying the issue title to \emph{"Covers predicate fails on obviously correct simple case"}.
After we reported this bug, the developers indicated that they would start working on a new predicate evaluation mechanism, which would include a tolerance argument to prevent such issues.

Based on \APPROACHAKA{}, we implemented a tool named \ToolName{} for testing \SDBMS{}s.\footnote{Our artifact is publicly available at \url{https://github.com/cuteDen-ECNU/Spatter} and \url{https://zenodo.org/records/13932460}.}
\ToolName{} consists of a \SMARTGENERATOR{} for generating high-quality SQL statements for \SDBMS{}s and uses \APPROACHAKA{} to validate the results of \SDBMS{}s.  
To the best of our knowledge, \ToolName{} is the first automated testing tool to detect logic bugs in \SDBMS{}s.
To evaluate the effectiveness of \APPROACHAKA{}, we selected 4 widely-used \SDBMS{}s as our testing targets: \PostGIS{}, \MySQLGIS{}, \DuckDBSpatial{}, and \SQLSERVER{}. 
The results show that our approach is effective in detecting spatial-related logic bugs in \SDBMS{}.
Our testing campaign detected \DTLfetch{status}{status}{Identified}{Sum} real, previously unknown, and unique bugs, which were missed by existing test suites.
\DTLfetch{status}{status}{Real}{Sum} of them have been confirmed, 
\DTLfetch{status}{status}{Sum}{Fixed} have been fixed, 
and \DTLfetch{type}{type}{Logic}{Sum} were logic bugs. 
During the testing campaign,
our work received positive feedback from the \SDBMS{}s’ developers.
Our experimental results demonstrate that the \SMARTGENERATOR{} significantly outperforms a naive random-shape generator
in detecting unique bugs, 
and \APPROACHAKA{} could
identify \OVERLOOKED{} logic bugs in \SDBMS{}s that were totally overlooked by
previous methodologies.

To summarize, we make the following contributions:
\begin{itemize}
    \item We propose \APPROACHNAME{}---a methodology to automatically validate the results of \SDBMS{}s.
    \item We designed and implemented an automated tool \ToolName{} for detecting logic bugs in \SDBMS{}s.
    \item We detected \DTLfetch{status}{status}{Identified}{Sum} previously-unknown and unique bugs in four widely-used \SDBMS{}s.
    \item We evaluated \ToolName{} and show that it outperforms the baseline strategies. 
\end{itemize}

\section{Background}
In this section, we provide important background information on geometry types, topological relationship queries, and affine transformations.

\subsection{Geometry Types\label{sec:geometry-type}}

\SDBMS{}s provide geometry types, on which we focus, given that they are widely used in real-world datasets~\cite{10.1093/nar/gkz934} and support various functions.
Figure~\ref{fig:geometry-type} illustrates seven widely-used 2D geometry types; these and other geometry types are standardized by \OGC{} (\OGCAKA{})~\cite{OGC2006WMS}.
The basic geometry types shown on the left include {\small POINT}, {\small LINESTRING}, and {\small POLYGON}, with dimensions of 0, 1, and 2 respectively. 
A geometry whose type starts with {"MULTI"} consists of multiple elements of the same basic type; we refer to this as \emph{MULTI geometry}.
The type {\small GEOMETRYCOLLECTION} comprises elements of mixed geometry type; we refer to this as \emph{MIXED geometry}.
Examples of \emph{MULTI} and \emph{MIXED} geometries are shown on the right side of Figure~\ref{fig:geometry-type}.

\begin{figure}[t]
\centering
\includegraphics[width=.65\textwidth]{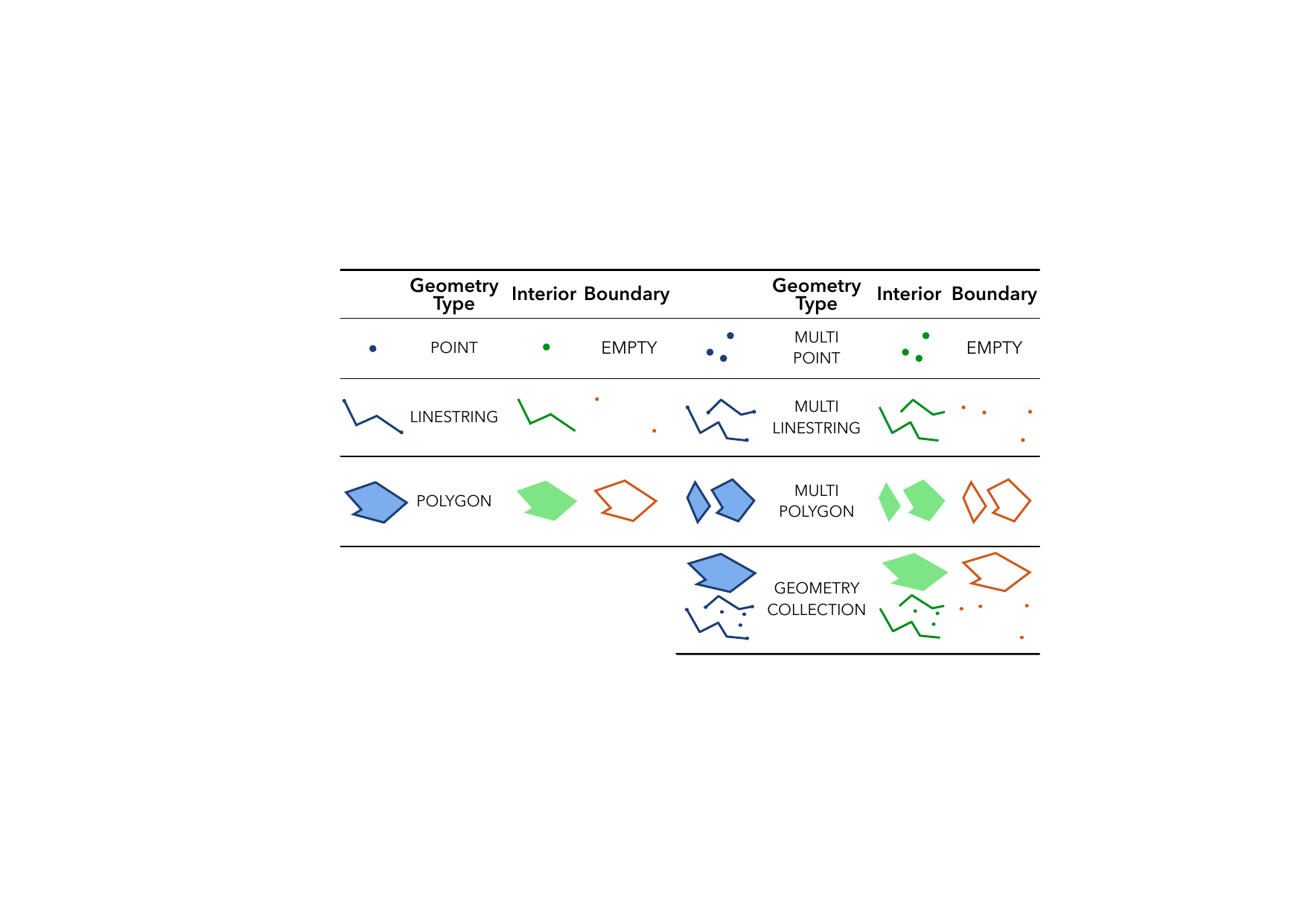}
\captionsetup{font={small, bf}}
\caption{Examples of 2D geometries and their geometry types, interiors, and boundaries. \label{fig:geometry-type} }
\end{figure}

\subsection{Topological Relationship Queries\label{sec:topo-rlt}}
Topological relationships represent qualitative properties that characterize the relative position of spatial objects~\cite{Felice2009}.
The concept of a \emph{topological relationship query} is a core feature when operating on spatial data, 
playing a key role in spatial queries and joins. 
Therefore, the correctness of the execution engine in a \SDBMS{} that processes such topological relationship queries is crucial. 
Despite this, to the best of our knowledge, 
no automated testing approach for detecting logic bugs in such functionality has been proposed.
\emph{Formal topological relationships} and \emph{named topological relationships} are two types of widely-used topological relationships in \SDBMS{}s. 

\paragraph{\textbf{Definition of formal topological relationships}}
\emph{Formal topological relationships} are formally defined in the Dimensionally Extended 9-Intersection Model (\DENIM{}~\cite{10.1007/3-540-56869-7_16, CLEMENTINI1994815}),
which establishes spatial relationships between two given geometries. 

In \DENIM{}, the geometry is conceptualized as a \emph{cell complex} using principles from algebraic topology~\cite{9-intersection-1990}.
A cell complex consists of cells and their respective \emph{faces}. The faces of a cell are crucial components that delineate the cell's boundaries.~\cite{10.1007/3-540-52208-5_32}.
Cells are primitive geometric objects defined across various spatial dimensions. Specifically, an $n$-cell is composed of $n+1$ geometrically independent cells of dimension $n-1$~\cite{10.1007/3-540-52208-5_32}.
A \emph{face} of $n$-cell $C$ is any $r$-cell contained within $C$, where $0 \leqslant r \leqslant n$~\cite{10.1007/3-540-52208-5_32}. 

Examples of cells are 0-cells representing points, 1-cells for edges, 2-cells for triangles, and 3-cells for tetrahedrons.
For example, a 2-cell has \emph{faces} that include two \emph{1-cells} (edges) and one \emph{2-cell}, representing its end-points and the edge connecting these end-points respectively.
In \SDBMS{} implementations, a \sqlinline{POINT} is a 0-cell; a \sqlinline{LINESTRING} is described as a cell complex composed of multiple 1-cells and their faces; a \sqlinline{POLYGON} is represented as a cell complex consisting of several 2-cells and their faces.

\emph{Formal topological relationships} are qualitatively defined based on the concepts of cell complexes' boundary, interior, and exterior.
Firstly, we precisely present the definition of \emph{boundary}, \emph{interior}, and \emph{exterior} for cells in Definition~\ref{def:cells}.
For instance, in a triangle (\emph{i.e.}, a 2-cell), the boundary consists of its three vertices and three edges; 
the interior is the area it encloses; 
the exterior is defined as the set from which the boundary and interior are excluded.
On top of Definition~\ref{def:cells}, Definition~\ref{def:cell-complexes} further elucidates the concepts of boundary, interior, and exterior within cell complexes. 

\begin{definition}[Boundary, Interior, and Exterior of Cells]
The closure of an $n$-cell $C$, denoted by $\overline{C}$, is the set of all \emph{r-faces} $f(r)$ of $C$, where $0 \leqslant r \leqslant n$, that is, $\overline{C} = \bigcup_{r = 0}^{n} f(r)$. 
The boundary of $C$, denoted by $\partial C$, is a finite union set of r-faces of $C$, where $0 \leqslant r \leqslant (n-1)$, that is, $\partial C = \bigcup_{r = 0}^{n - 1} f(r)$.
The interior of a cell $C$, denoted by $C^o$, is the difference set of closure and boundary of $C$, that is, $C^o = \overline{C} - \partial C$. 
The exterior of a cell $C$, denoted by $C^-$, is the difference set of universal set $\mathbb{U}$ and closure of $C$, that is, $C^- = \mathbb{U} - \overline{C}$~\cite{10.1007/3-540-52208-5_32}.
\label{def:cells}
\end{definition}

\begin{definition}[Boundary, Interior, Exterior of Cell Complexes]
    Let x be the number of cells $(C_1...C_x)$ that constitute a cell complex $G$.
    The boundary of $G$, denoted by $\partial G$, is the union of all the boundaries of $C_i$, while subtracting the intersections between boundaries of any distinct cell pairs, that is, 
    {
        $\partial G = \bigcup_{i = 1}^x \partial C - \bigcup_{i = 1}^x\bigcup_{j = i+1}^x(\partial C_i \cap \partial C_j).$
    }
    The interior of $G$, denoted as $G^o$, is the union set of the closure of $C$ (\emph{i.e.}, $\overline{C}$), subtracting $\partial G$, that is, $G^o = \bigcup_{i = 1}^x\overline{C_i} - \partial G$. 
    The exterior of $G$ is the intersection of $C_i^-$, that is, $G^- = \bigcap_{i=1}^x C_i^-$~\cite{10.1007/3-540-52208-5_32}. 
\label{def:cell-complexes}
\end{definition}

Figure~\ref{fig:geometry-type} intuitively demonstrates cell complexes, namely geometries as well as their interiors and boundaries.
Specifically, the boundary of {\small POINT} is defined as an empty set. 
The {\small LINESTRING} in the figure is a cell complex composed of three connected line segments (1-cells), with its boundary defined by its two endpoints. 
The \sqlinline{POLYGON} in the figure is a cell complex consisting of multiple triangles, with its boundary including its vertices and edges.

\DENIM{} qualitatively defines and determines the \emph{topological relationship} between two spatial entities. 
It does so by characterizing the relationships between two cell complexes through the interactions among their boundaries (denoted by $g^o$), interiors (denoted by $\partial g$), and exteriors (denoted by $g^-$). 
The model employs a $3\times3$ matrix, which is constructed with the aid of a dimension calculator, denoted as $\mathcal{D}$.
When the intersection between the compared elements is empty, the calculator yields a constant value, F; otherwise, it returns $n$, a value that represents the dimension of the intersection.

\begin{definition}[Formal Topological Relationship]
For any two cell complexes $g_1$ and $g_2$, 
\begin{equation}
\resizebox{0.4\textwidth}{!}{
$
R(g_1, g_2) = 
\begin{bmatrix}
   \mathcal{D}[g_1^o \cap g_2^o] & \mathcal{D}[g_1^o \cap \partial g_2] &  \mathcal{D}[g_1^o \cap g_2^-] \\
   \mathcal{D}[\partial g_1 \cap g_2^o] & \mathcal{D}[\partial g_1 \cap \partial g_2] &  \mathcal{D}[\partial g_1 \cap g_2^-] \\
   \mathcal{D}[g_1^- \cap g_2^o] & \mathcal{D}[g_1^- \cap \partial g_2] &  \mathcal{D}[g_1^- \cap g_2^-]
\end{bmatrix}
$
},
\label{eq:de9im}
\end{equation}
where $\mathcal{D}$ is a dimension calculator.
\label{def:de9im}
\end{definition}

\begin{figure}[!t]
\centering
\includegraphics[width=.5\textwidth]{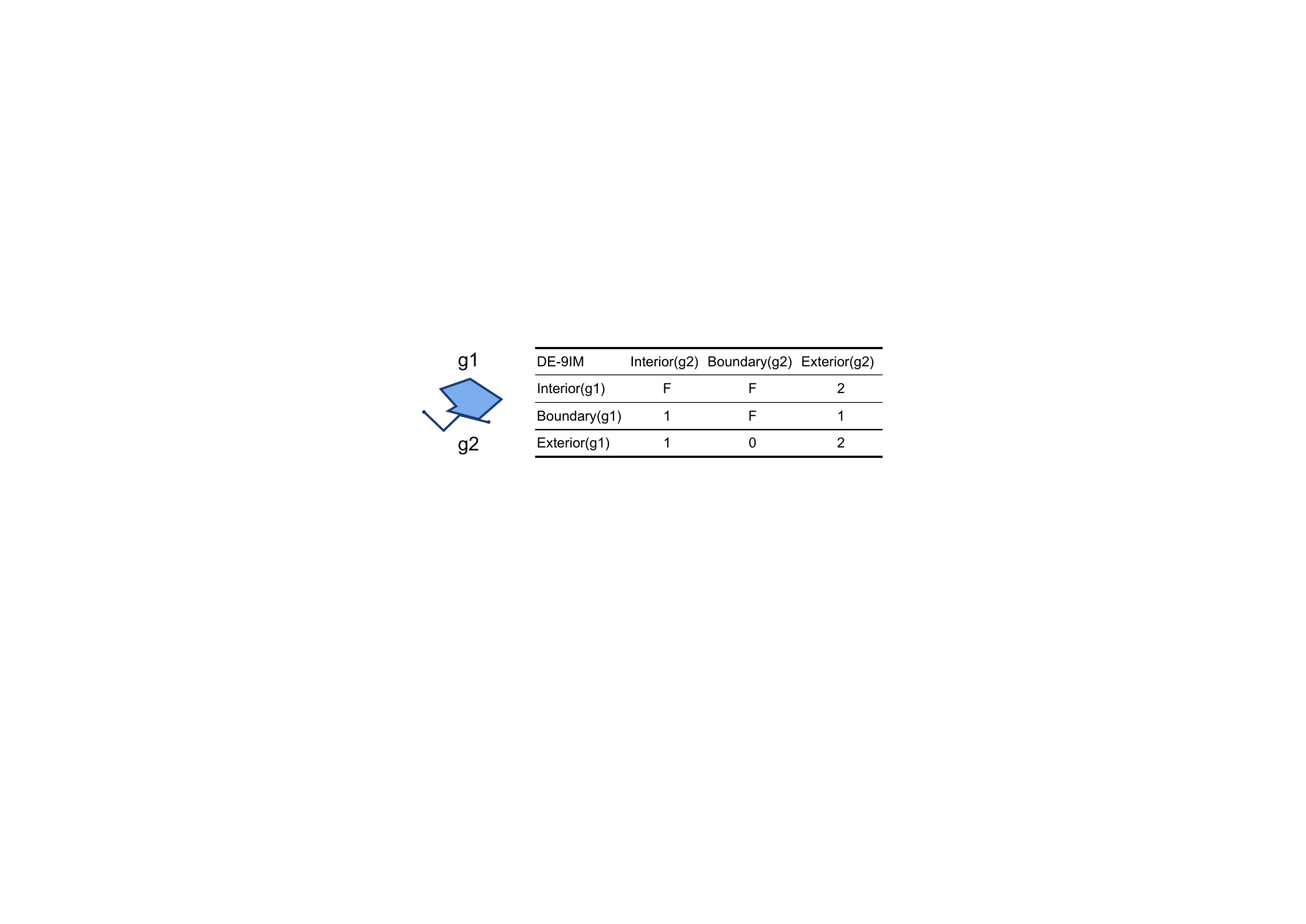}
\captionsetup{font={small, bf}}
\caption{ DE-9IM code of {\small POLYGON} g1 and {\small LINESTRING} g2 is \sqlinline{FF21F1102}. The letter \sqlinline{F} indicates that the intersection is an empty set. The numbers \sqlinline{0}, \sqlinline{1}, and \sqlinline{2} represent the dimension of the intersection.  \label{fig:de9im}} 
\end{figure}

Figure~\ref{fig:de9im} illustrates the \emph{formal topological relationship} between two example geometries $g1$ and $g2$. 
For instance, the dimension of the intersection between the boundary of g1 and the interior of g2 is 1 (\emph{i.e.}, $\partial g_1 \cap g_2^o = 1$), 
that is, the boundary of g1 and the interior of g2 intersect along an edge of dimension 1. Furthermore, 
the intersection of the interiors of g1 and g2 is an empty set (\emph{i.e.}, $g_1^o \cap g_2^o = F$), 
indicating that the interior of $g1$ does not intersect with the interior of $g2$.
In \PostGIS{}, the function \sqlinline{ST_Relate(g1,g2)}, which is implemented based on \DENIM{}, evaluates the formal topological relationship between $g1$ and $g2$. This relationship is represented as a string of length 9 with a domain of $\{0, 1, 2, F\}$. 
Specifically, the \DENIM{} code of g1 and g2 is \sqlinline{FF21F1102}.

\paragraph{\textbf{Named topological relationships}} \emph{Named topological relationships} are derived from \emph{formal topological relationships}.
Unlike \emph{formal topological relationships}, 
they describe relationships in a manner that is easily understood by users~\cite{10.1007/3-540-56869-7_16}.
A set of named topological relationships, such as functions \sqlinline{ST_Disjoint} and \sqlinline{ST_Intersects} are widely supported by \SDBMS{}s. 
Besides, \SDBMS{}s extend their functionality through specific named topological relationship queries. 
For example, as seen in Listing~\ref{lst:precision-issue}, the function \sqlinline{ST_Cover} is only implemented in \PostGIS{} and \DuckDBSpatial{}.

Given that well-defined \emph{formal topological relationships} and human-readable \emph{named topological relationship} queries are widely used in various scenarios~\cite{de9im}, ensuring their correctness is crucial. As far as we know, no general testing approach exists to identify logic bugs in queries regarding topological relationships, despite their importance.

\subsection{Affine Transformations \label{sec:affine-trans}}
We explore a fundamental concept in mathematics and computer graphics known as \emph{affine transformation}. 
Our intuition is that \emph{affine transformations} maintain topological relationships, because they preserve topological properties within \emph{Euclidean spaces}~\cite{affine-transformation}.
Figure~\ref{fig:affine} presents intuitive examples of \emph{affine transformations}, including \emph{rotation}, \emph{translation}, \emph{scaling}, and \emph{shearing}. 
The checkerboard patterns, differentiated by high and low transparency, illustrate the geometries before and after the transformations, respectively. 

\begin{figure}[!t]
\centering
\includegraphics[width=.65\textwidth]{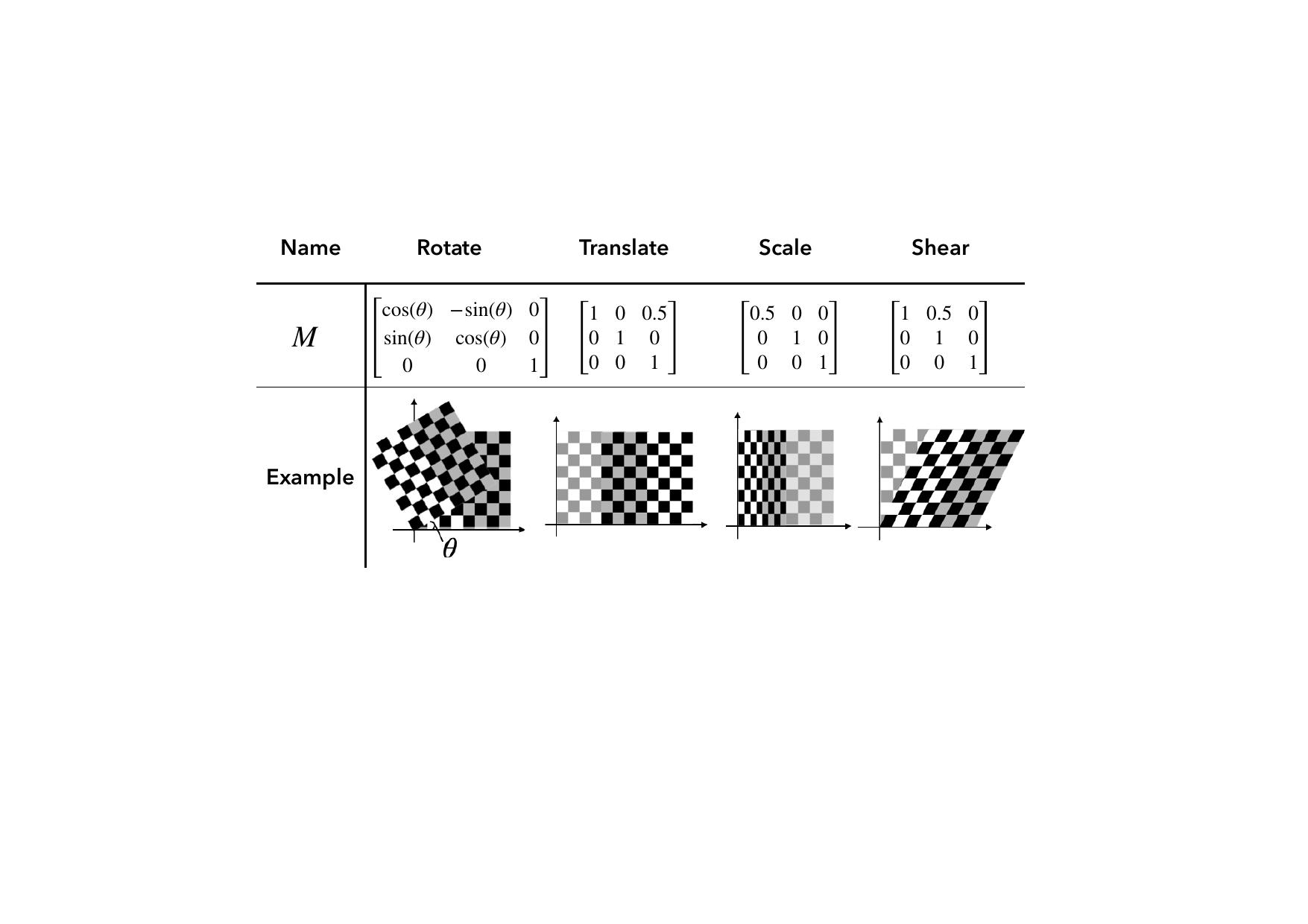}
\captionsetup{font={small, bf}}
\caption{ Examples of affine transformation. \label{fig:affine} }
\end{figure}

The widely-used topological relationship queries are implemented in \emph{Euclidean spaces} $\mathbb{R}^{2}$ and $\mathbb{R}^{3}$ in \SDBMS{}.
An affine transformation $\mathcal{A}: \mathbb{R}^{2} \rightarrow \mathbb{R}^{2}$ for a point $p \in \mathbb{R}^{2}$ is defined as 
\begin{equation}
    \mathcal{A}(p) = \mathbf{A}p + \mathbf{b} = 
    \begin{bmatrix}
    a_{11} & a_{12} \\
    a_{21} & a_{22} 
    \end{bmatrix} \cdot 
    \begin{bmatrix}
    x \\
    y
    \end{bmatrix} + 
    \begin{bmatrix}
    b_{1} \\
    b_{2}
    \end{bmatrix}
    =
    \begin{bmatrix}
    x' \\
    y'
    \end{bmatrix},
    \label{eq:affine}
\end{equation}
and $\mathcal{A}: \mathbb{R}^{3} \rightarrow \mathbb{R}^{3}$ for any point $p \in \mathbb{R}^{3}$ is defined as 
\begin{equation}
    \mathcal{A}(p) = \mathbf{A}p + \mathbf{b} = 
    \begin{bmatrix}
    a_{11} & a_{12} & a_{13}\\
    a_{21} & a_{22} & a_{23}\\ 
    a_{31} & a_{32} & a_{33} 
    \end{bmatrix} \cdot 
    \begin{bmatrix}
    x \\
    y \\
    z
    \end{bmatrix} + 
    \begin{bmatrix}
    b_{1} \\
    b_{2} \\
    b_{3}
    \end{bmatrix}
    =
    \begin{bmatrix}
    x' \\
    y' \\
    z' 
    \end{bmatrix},
    \label{eq:affine-3D}
\end{equation}
where $\mathbf{A}$ denotes an invertible matrix that describe the linear transformation; and $\mathbf{b}$ presents the translate vector.

The mapping matrix $M$ in Figure~\ref{fig:affine} is an augmented matrix that represents both the linear transformation and translation. 
$M$ has one more row and column than \text{Equation (\ref{eq:affine}) and (\ref{eq:affine-3D})}:
\begin{equation}
    \alpha'
    = M 
    \alpha
    = 
    \begin{bmatrix}
    \mathbf{A} & \mathbf{b} \\
    \mathbf{0} & 1 
    \end{bmatrix} 
    \alpha,
\label{eq:augmented-matrix}
\end{equation}
where $\alpha$ is the homogeneous vector of point $p$.

\section{Affine Equivalent Inputs\label{sec:AEI}}
In this section, we present the methodology to construct the expected result.
First, we formalize our intuition that \emph{topological relationships} remain preserved after applying \emph{affine transformations}. 
Then, we provide a proof of this proposition.
Based on the proved proposition,
we define \emph{Affine Equivalent Inputs} (\APPROACHAKA{}).

\paragraph{\textbf{Proof of our intuition}}

To formalize our intuition, we propose Proposition~\ref{proposition}, asserting that the topological relationship between a geometry pair is equal to that between its \emph{affine equivalent} pair. 
We use \emph{formal topological relationships} definition in Equation~\ref{eq:de9im}, since \emph{named topological relationships} are derived from them.
In \emph{formal topological relationships}, geometries are characterized by \emph{cell complexes}.
Given that the queries regarding topological relationships for geometry types are executed within the \emph{Euclidean spaces} $\mathbb{R}^2$ and $\mathbb{R}^3$,
we contextualize these \emph{cell complexes} accordingly. 

We introduce the concept of \emph{Euclidean spaces}, denoted as $\mathbb{R}^n$ where $2 \leq n \leq 3$, 
as a framework for delineating the structure of cell complexes. 
A geometry can be defined by a cell complex composed of cells.
Therefore, prior to defining affine equivalent geometries, we first define affine transformations on cells. 
A 0-cell (\emph{i.e.}, a point) in space $\mathbb{R}^2$ and $\mathbb{R}^3$ can be represented using Euclidean coordinates $(x, y)$ and $(x, y, z)$ respectively. 
A 1-cell, corresponding to a line segment, is characterized by the coordinates of its two distinct endpoints; 
a 2-cell, representing a triangle, is defined by three non-collinear edges;
a 3-cell, or a tetrahedron, consists of four triangles, forming a three-dimensional figure.
In $\mathbb{R}^2$ and $\mathbb{R}^3$, 2-cells and 3-cells are the highest-dimension cells, respectively.
Note that the geometry discussed here does not include curves.
Hence, an $n$-cell can be described by specifying the coordinates of its vertices.
Next, we define the affine transformation of cells in Definition~\ref{def:affine-cells}. 

\begin{definition}[Affine Transformations on Cells]
     For the vertices $\{p_1,...,p_n\}$ of a cell, an affine transformation can be defined as $\mathcal{A}: p_i \to {p_i}'$, where $1 \leq i \leq n$.
     \label{def:affine-cells}
\end{definition}

To formalize the concept of \emph{affine equivalent geometries}, we define an \emph{affine equivalence triplet} $(g, g', \mathcal{A})$, where $g$ and $g'$ are geometries, and $\mathcal{A}$ represents an affine transformation. 
We say that $g$ and $g'$ are affine equivalent under $\mathcal{A}$ if two conditions are both satisfied:
(1) for any cell $c$ in $g$, there exists a corresponding cell $c'$ in $g'$ such that $\mathcal{A}$ maps $c$ to $c'$,
and (2) for any cell $c'$ in $g'$, there exists a corresponding cell $c$ in $g$ such that the inverse of $\mathcal{A}$ maps $c'$ back to $c$.
Specifically, we apply an affine transformation to a geometry to generate a pair of \emph{affine equivalent geometries}. We further clarify the definition of affine equivalent geometry pairs in Definition ~\ref{def:AEGP}. 
Subsequently, in Proposition~\ref{proposition}, we formalize and prove our premise that the formal topological relationships remain invariant in pairs of affine equivalent geometries.

\begin{definition}[Affine Equivalent Geometry Pairs]
    Considering two geometry ordered pairs $(g_1, g_2)$ and $(g_1', g_2')$, they are affine equivalent geometry pairs if two affine equivalence triplets $(g_1, g_1', A)$ and $(g_2, g_2', A)$ exist.
    \label{def:AEGP}
\end{definition}

\begin{proposition}
For a geometry pair $(g_1, g_2)$, and its affine equivalent geometry pair $(g_1',g_2')$, we have $R(g_1, g_2) = R(g_1', g_2')$.
\label{proposition}
\end{proposition}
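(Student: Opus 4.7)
The plan is to reduce the equality of the two DE-9IM matrices to nine scalar equalities, and then show that each of the corresponding set-theoretic intersections is carried onto the other by the common affine map $\mathcal{A}$ guaranteed by Definition~\ref{def:AEGP}. Concretely, since the same $\mathcal{A}$ witnesses both $(g_1, g_1', \mathcal{A})$ and $(g_2, g_2', \mathcal{A})$, it suffices to prove three things for every cell complex $g$ in $\mathbb{R}^n$ with $n \in \{2,3\}$: (i) the operators $(\cdot)^o$, $\partial(\cdot)$, and $(\cdot)^-$ commute with $\mathcal{A}$; (ii) $\mathcal{A}$ distributes over finite intersection; and (iii) the dimension calculator $\mathcal{D}$ is invariant under $\mathcal{A}$.

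First I would establish these facts at the level of a single cell. Because the linear part $\mathbf{A}$ in Equations~(\ref{eq:affine})--(\ref{eq:affine-3D}) is invertible, $\mathcal{A}$ is a homeomorphism of $\mathbb{R}^n$ onto itself, so it sends any $r$-cell to an $r$-cell of the same dimension and preserves face incidence. Applying Definition~\ref{def:affine-cells} vertex by vertex and combining with Definition~\ref{def:cells} then yields $\mathcal{A}(\overline{C}) = \overline{\mathcal{A}(C)}$, $\mathcal{A}(\partial C) = \partial \mathcal{A}(C)$, $\mathcal{A}(C^o) = \mathcal{A}(C)^o$, and $\mathcal{A}(C^-) = \mathcal{A}(C)^-$, where the last uses $\mathcal{A}(\mathbb{U}) = \mathbb{U}$ by bijectivity. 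Lifting to cell complexes through Definition~\ref{def:cell-complexes} is then mechanical: both sides of each equation are finite unions and differences of cell-level sets, and a bijection commutes with union, intersection, and set-difference.

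Next I would use bijectivity once more to conclude that $\mathcal{A}(X \cap Y) = \mathcal{A}(X) \cap \mathcal{A}(Y)$ for any $X, Y \subseteq \mathbb{R}^n$. Combining this with the previous step produces, for each of the nine entries $(i,j)$ of Equation~(\ref{eq:de9im}), an identity of the form $\mathcal{A}\bigl(E_{ij}(g_1,g_2)\bigr) = E_{ij}(g_1',g_2')$, where $E_{ij}$ is the appropriate pairwise intersection of interiors, boundaries, or exteriors. Since $\mathcal{A}$ is a homeomorphism, it preserves topological dimension, and it also maps the empty set to the empty set, so the F-case is preserved as well. Hence $\mathcal{D}\bigl(E_{ij}(g_1,g_2)\bigr) = \mathcal{D}\bigl(E_{ij}(g_1',g_2')\bigr)$ for every $(i,j)$, which is exactly $R(g_1,g_2) = R(g_1',g_2')$.

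The main obstacle, as I see it, is not any single algebraic manipulation but the justification that the interior/boundary/exterior operators lift cleanly from cells to cell complexes under $\mathcal{A}$, because Definition~\ref{def:cell-complexes} introduces the terms $\bigcup_{i,j}(\partial C_i \cap \partial C_j)$ that could in principle behave differently on the image side if $\mathcal{A}$ disturbed incidences among faces of distinct cells. I would resolve this by observing that bijectivity together with preservation of face incidence means that coincident, overlapping, or degenerate faces on the source side correspond one-to-one with coincident, overlapping, or degenerate faces on the target side; no new intersections appear and none are lost. Finally, since the argument is symmetric in $(g_1,g_2)$ and $(g_1',g_2')$ via $\mathcal{A}^{-1}$, the equality of DE-9IM matrices is bidirectional, and the corresponding equality of named topological relationships follows since each named predicate is a pattern match on the DE-9IM code.
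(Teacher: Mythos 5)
Your proof is correct, and it reorganizes the argument differently from the paper. The paper proves a single representative entry, $\mathcal{D}[\partial g_1 \cap \partial g_2] = \mathcal{D}[\partial g_1' \cap \partial g_2']$, by splitting into the non-empty case (an element-wise bijection between the two intersections via $\mathcal{A}$ and $\mathcal{A}^{-1}$, followed by dimension preservation) and the empty case (a contradiction argument: a cell in the image intersection would pull back to one in the source intersection), and then declares the remaining eight entries analogous. You instead prove three uniform lemmas --- that $\mathcal{A}$ commutes with the interior, boundary, and exterior operators, that it distributes over intersection, and that $\mathcal{D}$ is $\mathcal{A}$-invariant --- which dispatch all nine entries at once and absorb the paper's empty/non-empty case split into the single observation that bijections preserve emptiness. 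Both arguments ultimately rest on the same two facts (invertibility of the affine map and its preservation of topological dimension), but your route makes explicit something the paper uses tacitly: the step ``a cell $\sigma_i \in \partial g_1 \cap \partial g_2$ corresponds to a cell $\sigma_i' \in \partial g_1' \cap \partial g_2'$'' silently assumes that the boundary of the image complex is the image of the boundary, i.e.\ $\mathcal{A}(\partial g) = \partial g'$, which is exactly your commutation lemma; your discussion of the cross-terms $\bigcup_{i,j}(\partial C_i \cap \partial C_j)$ in Definition~\ref{def:cell-complexes} addresses the one place where that lemma could conceivably fail. The paper's version is shorter and more concrete; yours is more complete and generalizes immediately to any homeomorphism of $\mathbb{R}^n$ that preserves dimension, at the cost of more bookkeeping.
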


\begin{proof}
For brevity, we show the correctness of the relation between \emph{Boundary} of $g_1$ and \emph{Boundary} of $g_2$, \emph{i.e.}, $\mathcal{D}[\partial g_1 \cap \partial g_2] = \mathcal{D}[\partial g_1' \cap \partial g_2']$, while omitting the proof for remaining relations in $R$ that can be proven similarly.
To this end, we carefully consider the only two possible intersection cases for $\partial g_1$ and $\partial g_2$: \emph{non-empty} and \emph{empty} set.

\begin{enumerate}
\item \emph{\textbf{Non-empty set.}} 
In this case, the intersection of $\partial g_1$ and $\partial g_2$ is a non-empty set. 
Suppose $k$ cells in the set, denoted as $\partial g_1 \cap \partial g_2 = \sigma : \{ \sigma_1, \ldots, \sigma_k\}$.
According to the Definition~\ref{def:AEGP}, for any cell $\sigma_i (1 \leq i \leq k)$, there exists a cell $\sigma'_i$ belonging to both $\partial g_1'$ and $\partial g_2'$, which implies $\sigma'_i \in \partial g_1' \cap \partial g_2'$ and thus $\mathcal{A}: \sigma \to \sigma'$ holds. 
Similarly, we have $\mathcal{A}^{-1}: \sigma' \to \sigma$.
Therefore, $\partial g_1 \cap \partial g_2$ is bijective to $\partial g_1' \cap \partial g_2'$. 
Considering that \emph{affine transformation} preserves the dimension~\cite{affine-transformation}, $\mathcal{D}[\partial g_1 \cap \partial g_2] = \mathcal{D}[\partial g_1' \cap \partial g_2']$ is thus proved in the non-empty set case.

\item \emph{\textbf{Empty set.}} 
In this case, the intersection of $\partial g_1$ and $\partial g_2$ is an empty set. 
We prove that the intersection of $\partial g_1'$ and $\partial g_2'$ is also an empty set by contradiction.
Assuming $\partial g_1' \cap \partial g_2' \neq \emptyset$, there must exist a face $\sigma_p$ in $\partial g_1' \cap \partial g_2'$.
According to the definition of affine equivalent geometry pairs, we can also find a face $\sigma'_p \in \partial g_1 \cap \partial g_2$ by $\mathcal{A}^{-1}: \sigma' \to \sigma$, which contradicts $\partial g_1 \cap \partial g_2 = \emptyset$.
\end{enumerate}
\end{proof}

\paragraph{\textbf{Definition of \APPROACHAKA{}}} 
We define a triplet $I = (\alpha,\beta,Q)$ as an input for an \SDBMS{}, where $\alpha$ and $\beta$ are two geometries, and $Q$ is a topological relationship query on geometry pair $\alpha$ and $\beta$.

\begin{definition}[\APPROACHNAME{}]
    $I_1 = (\alpha_1,\beta_1,Q_1)$ and $I_2 = (\alpha_2,\beta_2,Q_2)$ are \APPROACHNAME{}, if geometry pairs $(\alpha_1,\beta_1)$ and $(\alpha_2,\beta_2)$ are affine equivalent and $Q_1$ is equal to $Q_2$.
\label{def:AEI}
\end{definition}

According to Proposition~\ref{proposition}, passing \APPROACHNAME{} to the same \SDBMS{} should cause the topological relationship query to return the same result, unless the \SDBMS{} is affected by a bug.

\section{\ToolName{}}

We propose \ToolName{} (\underline{Spat}ial \DBMS{}s Tes\underline{ter}), an automated testing tool and approach that combines \APPROACHAKA{} with a \SMARTGENERATOR{}.
The core idea behind \APPROACHAKA{} is to construct two \emph{affine equivalent} geometry pairs and check the consistency of their topological relationships. 
The \SMARTGENERATOR{} introduces a new concept for efficiently generating interesting test inputs,
which not only generates random geometries (using a \genbasedstrgy{}), but also derives geometries by applying spatial functions to existing geometries (using a \derivativestrgy{}).
The key insight behind the \SMARTGENERATOR{} is its ability to generate various topological relationships with a limited number of geometries, thereby enabling \APPROACHAKA{} to validate the results efficiently.

Figure~\ref{fig:overview} shows an overview of \ToolName{}.
Initially, we create a database \SDB{}1
with $N$ geometries using the \SMARTGENERATOR{}'s \emph{random-shape} and \emph{derivative} strategies (see Step \textcircled{1}); the \derivativestrgy{} creates geometries based on existing ones.
In the next step, we canonicalize each geometry in \SDB{}1 by obtaining an equivalent, potentially equivalent representation of the geometry, and then apply an affine transformation to construct a new geometry, resulting in \SDB{}2 (see Step \textcircled{2}).
Since \emph{affine transformations} preserve the topological relationships and \emph{canonicalization} produces equivalent geometries at the spatial level, the set of topological relationships of \SDB{}1 is equivalent to that of \SDB{}2, which we check for result validation. 
For result validation, 
we randomly fill the placeholders of a query; 
then we check whether the row counts retrieved from the same \SDBMS{} by this query are consistent between \SDB{}1 and \SDB{}2.
Inconsistent counts indicate a logic bug (see Step \textcircled{3}).

\definecolor{c1}{RGB}{42,99,172} 
\definecolor{c2}{RGB}{255,88,93}
\definecolor{c3}{RGB}{208,167,39}
\definecolor{c4}{RGB}{119,71,64} 
\definecolor{c5}{RGB}{228,123,121} 
\definecolor{c6}{RGB}{175,171,172} 
\definecolor{c7}{RGB}{0,51,153}
\definecolor{c8}{RGB}{56,140,139} 

\definecolor{c9}{RGB}{0,0,0} 
\definecolor{c10}{RGB}{120,80,190} 

\definecolor{c11}{RGB}{255,204,0}
\definecolor{c12}{RGB}{128,128,128}
\definecolor{c13}{RGB}{98,148,96}
\definecolor{c14}{RGB}{184,168,207}
\definecolor{c15}{RGB}{253,207,158}
\definecolor{c16}{RGB}{182,118,108}
\definecolor{c17}{RGB}{175,171,172}

\definecolor{color1}{RGB}{0, 120, 190} 
\definecolor{color3}{RGB}{200, 180, 60} 

\definecolor{color2}{RGB}{220, 50, 50} 

\definecolor{p1}{RGB}{174,223,172} 
\definecolor{p2}{RGB}{224,175,107}  
\definecolor{p3}{RGB}{138,170,214}  
\definecolor{p4}{RGB}{222,117,123} 
\definecolor{p5}{RGB}{216,174,174} 
\definecolor{p6}{RGB}{163,137,214} 
\definecolor{p7}{RGB}{248,199,1} 
\definecolor{p8}{RGB}{205,205,205} 
\definecolor{p9}{RGB}{255,0,127} 

\definecolor{t1}{RGB}{148, 190, 146} 
\definecolor{t2}{RGB}{190, 149, 91} 
\definecolor{t3}{RGB}{0, 153, 255}
\definecolor{t4}{RGB}{255, 0, 51}

\newsavebox{\mycodebox} 


\begin{figure*}[t]

\includegraphics[width=0.9\textwidth]{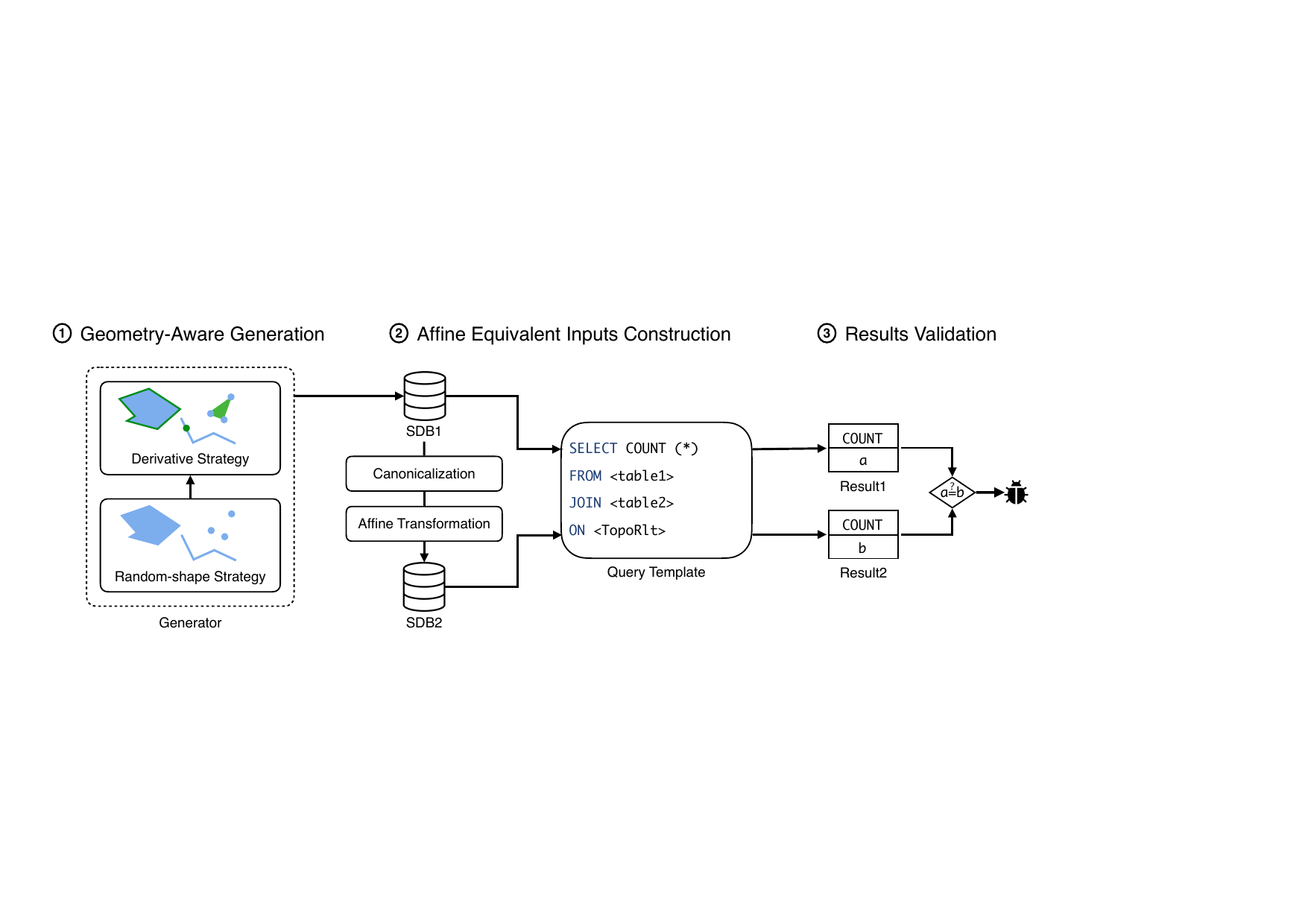}

\caption{Overview of \ToolName{}\label{fig:overview}}

\end{figure*}

\subsection{\SMARTGENERATORUPPER}
\label{sec:smart-gen}
In this section, we introduce a crucial component of \ToolName{}: a generator designed for efficiently detecting bugs in \SDBMS{}s. 
Previously, efforts to find logic bugs in SDBMSs relied on unit tests and user reports, as no automated geometry generators were designed for this purpose.
One naive approach is to randomly generate syntactically valid geometries, corresponding to our \genbasedstrgy{}.
However, the \genbasedstrgy{} makes it unlikely to observe a variety of topological relationships for the generated geometries, making it difficult to exercise the \SDBMS{}s. 
This issue is further exacerbated by the necessity of setting a limit on the number of geometries to avoid excessive time overheads in \SDBMS{}s (see Section~\ref{sec:time-distribution}).
To improve the efficiency of bug finding, we believe that a wider range of topological relationships within this limited number of geometries is required, allowing us to detect more logic bugs in queries regarding topological relationships. 
Therefore, we propose the \derivativestrgy{} that derives existing geometries by applying spatial functions.
The \SMARTGENERATOR{} employs two strategies for generating geometries: the \emph{random-shape} and \emph{derivative} strategies.

{\small
\begin{algorithm}[t]
\caption{\SMARTGENERATORUPPER}
\label{alg:gag}
\begin{algorithmic}[1]
\Require $N$, geometry number; $m$, table number 
\Ensure $sdb$, generated spatial database 
\Function{\textnormal{Generate}}{$N$, $m$}
    \State $sdb \gets$ \Call{CreateTables}{$m$} \label{line:createtables}
    \State $g \gets$ \Call{RandomShape}{} \label{line:init-randomshape}
    \State $sdb \gets$ \Call{InsertIntoRandomTable}{$sdb$, $g$}
    \For{$\_ \gets 2$ to $N$} \label{line:gag-begin}
        \If{\Call{RandomShape}{} is True} 
            \State $g \gets$ \Call{RandomShape}{}
        \Else{}
            \State $editFunc \gets$ \Call{RandomEditFunction}{}
            \State $g \gets$ \Call{Derive}{$sdb$, $editFunc$}
        \EndIf
        \State $sdb \gets$ \Call{InsertIntoRandomTable}{$sdb$, $g$}
    \EndFor \label{line:gag-end}
    \State \Return $sdb$
\EndFunction 

\Function{\textnormal{RandomShape}}{ } \label{line:randomshape-begin}
    \State $gType \gets$ \Call{RandomGeometryType}{} \label{line:geometry-type-select}
    \State $g \gets$ \Call{SyntaxGen}{gType} \label{line:syntax-gen}
    \State \Return $g$
\EndFunction \label{line:randomshape-end}

\Function{\textnormal{Derive}}{$sdb$, $editFunc$} \label{line:derive-begin}
    \State $k \gets$ the geometry number $editFunc$ needed \label{line:get-k}
    \State $gVector \gets$ k geometries randomly selected from $sdb$ \label{line:gVector}
    \State $g \gets$ \Call{editFunc}{$gVector$} \label{line:derive-sdbms}
    \If{Deriving $g$ failed} \label{line:checking-begin}
        \State $g \gets$ \Call{EmptyShape}{}
        \EndIf \label{line:checking-end}
    \State \Return $g$
\EndFunction \label{line:derive-end}

\end{algorithmic}
\end{algorithm}
}

\paragraph{\textbf{Random-shape strategy}}
The \genbasedstrgy{} generates syntactically valid geometries. 
Line~\ref{line:randomshape-begin}--\ref{line:randomshape-end} in Algorithm~\ref{alg:gag} demonstrate how the \genbasedstrgy{} generates a geometry: 
first, it randomly selects a geometry type $gType$ (Line~\ref{line:geometry-type-select}), 
and then uses the syntax definition for $gType$ to generate the geometry $g$ (Line~\ref{line:syntax-gen}). 
Geometries created using the \genbasedstrgy{} are valid at the syntax level, but not necessarily at the semantic level.
For example, the polygon \sqlinline{POLYGON((0 0,1 1,0 1,1 0,0 0))} is an invalid shape due to its self-intersecting boundaries, but it can be generated by the \genbasedstrgy{} since it is syntactically valid.
When a geometry is semantically invalid, the \SDBMS{} will indicate this with an error, which we ignore in Spatter.

\paragraph{\textbf{Derivative strategy}}
The \derivativestrgy{} is designed to create more topological relationships by deriving a geometry based on a set of editing functions provided by \SDBMS{}s.
Line~\ref{line:derive-begin}--\ref{line:derive-end} in Algorithm~\ref{alg:gag} illustrate how the \derivativestrgy{} generates a geometry.
First, we obtain the input number $k$ of editing function $editFunc$ (Line~\ref{line:get-k}).
Then, $k$ geometries are randomly selected from $sdb$ to construct a geometry vector $gVector$ (Line~\ref{line:gVector}).
The new geometry $g$ is derived from $gVector$ by $editFunc$ (Line~\ref{line:derive-sdbms}).
However, the derivation process may encounter errors if the $editFunc$ does not apply to the geometries in $gVector$.
To address this, we check for any failures, opting to generate an \emph{EMPTY} geometry in case of failure (Line~\ref{line:checking-begin}--\ref{line:checking-end}).
Finally, it returns geometry $g$ derived from the existing $k$ geometries. 

Table~\ref{tab:generator} shows editing functions categorized by geometric dimensions of input geometries. 
Generic functions can handle geometries of any dimension.  
In contrast, dimensional-based functions such as line-based, polygon-based, and multi-dimensional functions, operate exclusively on lines, polygons, and \emph{MULTI} and \emph{MIXED} geometries, respectively.

\begin{table*}[t]
\tiny
    \centering
    \captionsetup{font={small}}
    \caption{Categories of the operations in the \derivativestrgy{}.}
    \begin{tabular}{lll}
      \toprule
       Type & Example & Description \\
      \midrule
       \multirow{2}{*}{Line-Based}  
       & SetPoint & Replace a specific point of an input LINESTRING with a given point.\\
       & Polygonize & Create a GEOMETRYCOLLECTION containing the polygons formed by the line.\\
       \midrule
       \multirow{2}{*}{Polygon-Based}  & DumpRings & Extract the rings of an input POLYGON. \\
       & ForcePolygonCW & Force an input POLYGON or MULTIPOLYGON to use a clockwise orientation for their exterior ring. \\
       \midrule
       \multirow{2}{*}{Multi-Dimensional}
       & GeometryN & Fetch the Nth geometry element from an input MULTI or MIXED geometry based on 1-based indexing.\\
       & CollectionExtract & Produce a collection of geometries of a specified type, extracted from an input MULTI or MIXED geometry.\\
       \midrule
       \multirow{2}{*}{Generic} 
       & Boundary & Retrieve the boundary of an input geometry. \\ 
       & ConvexHull & Generate the convex hull of an input geometry. \\
      \bottomrule
    \end{tabular}
    \label{tab:generator}
\end{table*}

\paragraph{\textbf{Generation process}} 
The $Generate$ function in Algorithm~\ref{alg:gag} outlines the generation process of the \SMARTGENERATOR{}. 
The algorithm expects a specified number of geometries, $N$, and a target number of tables, $m$, as input.
During the initialization phase, a spatial database $sdb$ is created with $m$ empty tables (Line~\ref{line:createtables}).
Since no geometries can be derived from an empty table,
we employ the \genbasedstrgy{} to generate a geometry and randomly insert it into one of $sdb$ tables (Line~\ref{line:init-randomshape}).
For the creation of each geometry $g$, we randomly select one strategy and insert the generated geometry into a randomly chosen table (Line~\ref{line:gag-begin}--\ref{line:gag-end}).
Last, we return the newly-created spatial database $sdb$.

\subsection{\textbf{Affine Transformation}}
Proposition~\ref{proposition} suggests a high-level idea: \emph{affine transformations} preserve topological relationships. Therefore, we apply the same affine transformation to each geometry in the generated database.

Algorithm~\ref{alg:affine} shows the workflow of constructing an \emph{affine equivalent} geometry for $g$. 
The inputs of this workflow are the geometry $g$ and the \emph{Euclidean space} dimension $n$. 
We first generate a mapping matrix $M$ (Line~\ref{line:generateM}). 
The $GenerateMappingMatrix$ function involves generating an invertible matrix $A$ and a translation vector $b$  (Line~\ref{line:genrateM-start}--\ref{line:genrateM-end}).
Then, we create $g'$, a copy of the geometry $g$ (Line~\ref{line:copy}). 
For each point $p$ in $g'$, we update its coordinates with the values generated by the \emph{Affine} function (Line~\ref{line:point-start}--\ref{line:point-end}). 
The \emph{Affine} function performs three steps:
converting the coordinate point $p$ into a homogeneous vector, 
left-multiplying the vector by matrix $M$, referring to \text{Equation (\ref{eq:augmented-matrix})},
and transforming the resulting vector back into a coordinate point.
Finally, the algorithm returns $g'$, which is an affine equivalent geometry of $g$ (Line~\ref{line:return}).

{\small
\begin{algorithm}[t]
\caption{Apply an affine transformation on a geometry.}
\label{alg:affine}
\begin{algorithmic}[1]
\Require  $g$, a geometry; $n$, Euclidean space dimension
\Ensure $g'$, affine equivalent input of $g$
\Function{\textnormal{Construct}}{$g$, $n$}
    \State $M \gets \Call{GenerateMappingMatrix}{n}$ \label{line:generateM}
    \State $g' \gets g$ \label{line:copy}
    \For{\textbf{each} $p$ \textbf{in} $g'$} \label{line:point-start}
        \State $p \gets \Call{Affine}{p, M}$\label{line:point-end}
    \EndFor

    \State \Return $g'$ \label{line:return}
\EndFunction 

\Function{\textnormal{GenerateMappingMatrix}}{$n$} \label{line:genrateM-start}
    \State $A \gets$ a random non-singular matrix $\in \mathbb{R}^{n \times n}$ \label{line:generateA}
    \State $b \gets$ a random vector $\in \mathbb{R}^{n}$ \label{line:generateb}
    \State $M \gets \Call{AugmentedMatrix}{A,b} $
    \State \Return $M$ \label{line:genrateM-end}
\EndFunction 

\end{algorithmic}
\end{algorithm}
}

\paragraph{\textbf{Avoiding precision issues}} 
An affine transformation utilizes a matrix to construct affine equivalent geometries. 
However, using matrices with floating-point numbers introduces precision issues.   
For example, consider \text{Equation (\ref{eq:imprecision-example})}, where we apply an affine transformation to a 2D point represented by \sqlinline{POINT(0.1,0)}. We first convert the point into a homogeneous vector $p$, then generate a matrix $M$ to left-multiply $p$, obtaining a vector $p'$, and finally convert it back to a point with coordinates $(0.02 + 4 \times 10^{-18}, 0)$. 
Precision issues arise because multiplying floating-point numbers on computers yields values that are very close to, but not exactly, the theoretical values.
Specifically, while in theory, $0.1$ multiplied by $0.2$ equals $0.02$, the representation of floating-point numbers in computers may lead to an imprecise value, with a potential error of \sqlinline{4E-18}.
{\small
\begin{equation}
    p' = M \cdot p
    =
    \begin{bmatrix}
        0.2 & 1 & 0\\ 0 & 1 & 0\\ 0 & 0 & 1
    \end{bmatrix}  
    \begin{bmatrix} 0.1 \\ 0 \\ 1 \end{bmatrix} = 
    \begin{bmatrix} 0.02 + 4 \times 10^{-18} \\ 0 \\ 1 \end{bmatrix}
\label{eq:imprecision-example}
\end{equation}
}

Although such errors can be circumvented at the implementation level during affine transformations, precision issues with floating-point numbers still occur in \SDBMS{}s.
The bug shown in Listing~\ref{lst:precision-issue} illustrates an example of precision issues in \PostGIS{}.
Despite this bug being confirmed by the developers at the beginning of our testing campaign, 
bugs unrelated to precision issues are more eagerly anticipated by both the developers and us.
To avoid false alarms caused by precision issues, we chose not to introduce floating-point numbers during the steps of affine transformations and spatial data generation. 
Specifically, for an affine transformation, we generate matrix $A$ (Line~\ref{line:generateA}) and vector $b$ (Line~\ref{line:generateb}) using random integers, as outlined in Algorithm~\ref{alg:affine}.

\subsection{\textbf{Canonicalization}\label{sec:canonical}}
We consider canonicalization as a special case of constructing affine equivalent geometries, since applying the special matrix $\mathbf{E}$ produces the geometry itself. 
In canonicalization, we aim to transform the original representation of each geometry into a canonical representation, which not only constructs another kind of expected results effective for testing, but also provides a pre-processing method for Algorithm~\ref{alg:affine}. 
We define operations to convert the representation of each geometry to a canonical one. 
We convert the representations at the element and value level.

\begin{figure}[t]
    

    

        
       
    
\begin{tikzpicture}[node distance=1cm, auto, font=\scriptsize\sffamily]
    \tikzstyle{box} = [rectangle, draw, font=\scriptsize\sffamily, minimum width=1cm, align=left]

  \node (des1) [yshift=0.5cm, xshift=4cm] {\textcolor{p8}{MULTI}LINESTRING\textcolor{p8}{(}(0 2,1 0,3 1,3 1,5 0)\textcolor{p8}{,EMPTY)}};
  \draw [-, dashed] (0,0) -- (8.3,0);
  \node (e-level) [yshift=-0.2cm, xshift=1cm] {Element Level};
  \draw [-, dashed] (0,-1) -- (8.3,-1);
  \node (des2) [yshift=-1.5cm, xshift=4cm] {LINESTRING(0 2,1 0,3 1,\textcolor{p8}{3 1,}5 0)};
  \draw [-, dashed] (0,-2) -- (8.3,-2);
  \node (v-level) [yshift=-2.2cm, xshift=0.85cm] {Value Level};
  \draw [-, dashed] (0,-3) -- (8.3,-3);
  \node (des3) [yshift=-3.5cm, xshift=4cm] {LINESTRING(0 2,1 0,3 1,5 0)};
 
  \draw [-latex] (des1) -- (4,0);
  \draw [-latex] (4,-1) -- (des2);
  \draw [-latex] (des2) -- (4,-2);
  \draw [-latex] (4,-3) -- (des3);
  
  \node (elem1) [box, below=0.1 of e-level] {\CapEmptyRemoval};
  \node (elem2) [box, right=0.3 of elem1] {\CapHomogenization};
  \node (elem3) [box, right=0.3 of elem2] {\CapDuplicateRemoval};
  \node (elem4) [box, right=0.3 of elem3] {\CapReordering};

  \draw [-latex] (elem1) -- (elem2);
  \draw [-latex] (elem2) -- (elem3);
  \draw [-latex] (elem3) -- (elem4);

  \node(value1) [box, below=0.05 of v-level, xshift=0.9cm] {\CapConsecutiveDuplicateRemoval};
  \node(value2) [box, right=0.3 of value1] {\CapReordering};
  
  \draw [-latex] (value1) -- (value2);
\end{tikzpicture}
\captionsetup{font={small}}
\caption{Canonicalization for geometry at different levels.}
\label{fig:canonicalization}
\end{figure}

\paragraph{\textbf{Element level}\label{sec:element-level}} 
The concept of \emph{element} pertains to both MULTI and MIXED geometries.
Canonicalization at the element level involves four steps and applies exclusively to MULTI and MIXED geometries. 
The process begins with \EmptyRemoval{}, eliminating any \emph{EMPTY} elements present. 
Next, \emph{homogenization} transforms a MULTI geometry containing only a single element into a basic-type geometry 
and flattens nested collections into a uniform structural representation. 
Subsequently, duplicated elements are removed; note that duplicates are identified based on their shape.
Finally, the elements are reordered according to their dimensions.

Figure~\ref{fig:canonicalization} illustrates the canonicalization of a {\small MULTILINESTRING} at the element level. Initially, the \emph{EMPTY} element is removed, yielding a \sqlinline{MULTILINESTRING((2,1 0,3 1,3 1,5 0))}; 
subsequently, it is homogenized into a \sqlinline{LINESTRING(2,1 0,3 1,3 1,5 0)}, since there is only one element within the MULTI geometry.  
In this example, neither the \DuplicateRemoval{} nor the \Reordering{} is necessary.

\paragraph{\textbf{Value level}}
The value-level canonicalization process is designed for each basic-type element.
The location of a basic-type element is specified by an ordered number of pairs or triples.
Specifically, in 2D geometries, a {\small POINT} is represented by a pair of values; 
a {\small LINESTRING} is presented by ordered points; 
and a {\small POLYGON} is defined by one or multiple loops that ordered points that start and end at the same point.

Canonicalization at the value level involves two steps: \ConsecutiveDuplicateRemoval{} and \Reordering{}. 
First, we eliminate redundant points that are identical to the preceding point.
Then, we reorder the coordinates by direction. For a {\small LINESTRING}, we determine whether to reverse by comparing the values of the endpoints in the order of the x-axis, y-axis, and z-axis. 
For a {\small POLYGON}, we convert all the loops into clockwise orientation.

Figure~\ref{fig:canonicalization} demonstrates the canonicalization of a {\small LINESTRING} at the value level.
First, we remove the redundant point \sqlinline{(3 1)}, resulting in \sqlinline{LINESTRING(0 2,1 0,3 1,5 0)}. Then, we reorder the coordinates. Since the endpoint coordinates are ordered as expected, reordering does not alter the representation.

\subsection{\textbf{Results Validation}\label{sec:result-validation}} 
By utilizing \emph{canonicalization} and \emph{affine transformation}, it is possible to establish multiple sets of \APPROACHNAME{} (\APPROACHAKA{}).
Consider a spatial database named \SDB{}1. For each geometry $g$ within \SDB{}1, we perform canonicalization and apply an affine transformation to construct its \emph{affine equivalent} geometry, $g'$, resulting in a new spatial database, \SDB{}2.
The geometries $g$ and $g'$ are stored in tables of the same name in their respective databases.

Figure~\ref{fig:overview} presents a query template containing three placeholders: 
two table names (\emph{i.e.}, <table1>, <table2>) and one for a topological relationship condition (\emph{i.e.}, <TopoRlt>).
Two table names are valid and selected randomly from \SDB{}1.
The boolean condition <TopoRlt> is valid for the tested \SDBMS{} and is randomly selected from a list containing conditions regarding topological relationships from \SDBMS{} user manuals.  
When checking whether the results evaluated by the same query are consistent between \SDB{}1 and \SDB{}2,
any discrepancy reveals a bug in the tested \SDBMS{}s.
To illustrate, consider the case where t1 and t2 are two valid tables in \SDB{}1, and \sqlinline{ST_Covers} is a topological relationship function specific to \PostGIS{}.
The query filled with t1, t2, and \sqlinline{ST_Covers(t1.g, t2.g)} retrieves the row count from \PostGIS{}, indicating whether geometries in t1 cover geometries in t2.
If the count retrieved from \SDB{}1 is not equal to that from \SDB{}2, we detect a potential logic bug.

\section{Evaluation}

In this section, we first summarize and classify the detected bugs, 
along with feedback from developers. 
Then, we analyze the bugs found by \APPROACHAKA{}, present bug examples, and discuss bug-inducing patterns.
Next, we assess whether the previous methodologies could detect the logic identified by \APPROACHAKA{}.
Finally, we conduct experiments to demonstrate the efficiency of \ToolName{}.

\paragraph{Tested \SDBMS{}s}
We evaluated our approach on 4 well-known, mature, and actively maintained \SDBMS{}s: \PostGIS{}, \DuckDBSpatial{}, \MySQLGIS{}, and \SQLSERVER{}.
\PostGIS{} was chosen as our primary testing target due to its popularity and high ranking in the DB-Engines Ranking.
\DuckDB{} is recognized as the most popular embedded OLAP system~\cite{10.1145/3299869.3320212},
and we selected its extension, \DuckDBSpatial{}, for testing.
\MySQLGIS{}, one of the most popular open-source relational \DBMS{}s, was tested for its built-in spatial functionality. In addition, we included one commercial \SDBMS{}, namely \SQLSERVER{}. However, given limited feedback on our reported bugs, we discontinued our testing efforts and did not consider other commercial \SDBMS{}s.

\subsection{New Bugs\label{sec:new-bugs}}

\paragraph{Methodology.} 
To evaluate the effectiveness of \APPROACHNAME{} in detecting bugs,
we incrementally implemented \ToolName{} and intermittently tested the latest versions of \SDBMS{}s over four months in a testing campaign.
Typically, Spatter ran for 10 minutes to 1 hour, as it detected issues quickly.
\ToolName{} records two sequences of statements for generating the \emph{affine equivalent} databases for each discrepancy.
Before reporting any potential issues, 
we performed two steps. 
First, we both automatically~\cite{988498} and manually reduced them.
Second, we determined which result violated the definition of topological relationship functions according to the \SDBMS{}s' documentation. 

In testing \PostGIS{} and \DuckDBSpatial{}, 
we discovered that some bugs originated from a common third-party library named \GEOS{}.
At the beginning of our testing campaign, we reported potential bugs to the \PostGIS{} and \DuckDBSpatial{} communities. The developers then confirmed them as upstream bugs and reported them to the \GEOS{} community.
Subsequently, if we determined the bug was within their library, we directly reported bugs to \GEOS{}.
After fixes were applied, we updated the \GEOS{} version on the \SDBMS{}s on our machine to check if the issue persisted.
We reported an issue as a separate bug to the \SDBMS{} community only if persisted in the tested \SDBMS{} after all fixes were applied.

\paragraph{\textbf{Overall bug detection results}}
Table~\ref{tab:bug-summary} summarizes the bugs identified during our testing campaign. 
We classified the detected bugs into four distinct categories:
\begin{itemize}
    \item \emph{Fixed bugs} refer to those that have been confirmed by developers and subsequently addressed with fixing patches.
    \item \emph{Confirmed bugs} refer to the bugs that have been acknowledged by developers, but have not been fixed.
    \item \emph{Unconfirmed bugs} refer to the bugs that are identified manually according to the usage documentation and are awaiting developer confirmation. 
    \item \emph{Duplicate bugs} refer to bugs confirmed to have the same cause as previously confirmed bugs.
\end{itemize}
We consider \DTLfetch{status}{status}{Identified}{Sum} of them as previously unknown, unique bugs, \DTLfetch{status}{status}{Real}{Sum} of which have been confirmed or fixed by the developers.
Most of them affected \PostGIS{}, as its developers quickly responded and addressed many of the issues we reported.
We detected 2 bugs in \SQLSERVER{} and reported them to the \SQLSERVER{} community, 
but have not received any response; hence, we ceased testing it. 
One of the reported bugs was a duplicate, caused by the same root cause as a previously confirmed bug. 

{\small
\begin{table}
\centering
\captionsetup{font={small}}
\caption{Status of the reported bugs in \SDBMS{}s. \GEOS{} is a third-party library used by \PostGIS{} and \DuckDBSpatial{}. The bugs detected in \GEOS{} are listed separately.}
\begin{tabular}{l@{\hspace{2mm}}r@{\hspace{2mm}}r@{\hspace{2mm}}r@{\hspace{2mm}}c@{\hspace{2mm}}r}
    \toprule
    \textbf{\SDBMS{}} & \small{Fixed} & \small{Confirmed} & \small{Unconfirmed} & \small{Duplicate} & \small{Sum} \\
    \midrule
    \GEOS       
    & \DTLfetch{status}{status}{GEOS}{Fixed} 
    & \DTLfetch{status}{status}{GEOS}{Confirmed} 
    & \DTLfetch{status}{status}{GEOS}{Unconfirmed}
    & \DTLfetch{status}{status}{GEOS}{Duplicate} 
    & \DTLfetch{status}{status}{GEOS}{Sum} \\
    \PostGIS        
    & \DTLfetch{status}{status}{PostGIS}{Fixed} 
    & \DTLfetch{status}{status}{PostGIS}{Confirmed} 
    & \DTLfetch{status}{status}{PostGIS}{Unconfirmed}
    & \DTLfetch{status}{status}{PostGIS}{Duplicate} 
    & \DTLfetch{status}{status}{PostGIS}{Sum} \\
    \DuckDBSpatial      
    & \DTLfetch{status}{status}{DuckDBSpatial}{Fixed}  
    & \DTLfetch{status}{status}{DuckDBSpatial}{Confirmed} 
    & \DTLfetch{status}{status}{DuckDBSpatial}{Unconfirmed} 
    & \DTLfetch{status}{status}{DuckDBSpatial}{Duplicate} 
    & \DTLfetch{status}{status}{DuckDBSpatial}{Sum} \\
    \MySQLGIS       
    & \DTLfetch{status}{status}{MySQLGIS}{Fixed}  
    & \DTLfetch{status}{status}{MySQLGIS}{Confirmed} 
    & \DTLfetch{status}{status}{MySQLGIS}{Unconfirmed} 
    & \DTLfetch{status}{status}{MySQLGIS}{Duplicate} 
    & \DTLfetch{status}{status}{MySQLGIS}{Sum} \\
    \SQLSERVER      
    & \DTLfetch{status}{status}{SQLSERVER}{Fixed}  
    & \DTLfetch{status}{status}{SQLSERVER}{Confirmed} 
    & \DTLfetch{status}{status}{SQLSERVER}{Unconfirmed} 
    & \DTLfetch{status}{status}{SQLSERVER}{Duplicate} 
    & \DTLfetch{status}{status}{SQLSERVER}{Sum} \\
    \midrule
    Sum 
    & \DTLfetch{status}{status}{Sum}{Fixed}  
    & \DTLfetch{status}{status}{Sum}{Confirmed} 
    & \DTLfetch{status}{status}{Sum}{Unconfirmed} 
    & \DTLfetch{status}{status}{Sum}{Duplicate} 
    & \DTLfetch{status}{status}{Sum}{Sum} \\
    \bottomrule
\end{tabular}

\label{tab:bug-summary}
\end{table}
}

Table~\ref{tab:bugs-types} shows the previously unknown and confirmed bugs, 
which are classified into logic and crash bugs.
Out of the \DTLfetch{type}{type}{Sum}{Sum} bugs, the majority,
\DTLfetch{type}{type}{Logic}{Sum} bugs were logic bugs that we aimed to find. 
We detected 9 logic bugs in \GEOS{} and 7 \SDBMS{}-specific logic bugs in \PostGIS{}.
Besides, \ToolName{} detected \DTLfetch{type}{type}{Crash}{Sum} crash bugs.
All of the crash bugs were fixed within one day.
Logic bugs are more difficult to fix than crash bugs, 
because pinpointing the root cause is usually more difficult. 
At the same time, developers must ensure that patches do not affect other scenarios. 
For some bugs, 
\PostGIS{} may need to propose and implement a new algorithm to address them, 
as indicated by developer feedback: \emph{"There is a new topological predicate algorithm being worked on which solves this problem. It should be ready by mid-2024.
"}

{\small
\begin{table}
\centering
\captionsetup{font={small}}
\caption{A classification of the confirmed and fixed bugs. Note that 2 fixed logic bugs are not listed here, because they were found in \JTS{}, which is not an \SDBMS{}.}
\begin{tabular}{l@{\hspace{2mm}}r@{\hspace{2mm}}r@{\hspace{2mm}}r@{\hspace{2mm}}r@{\hspace{2mm}}r}
    \toprule
      & \multicolumn{2}{c}{\textbf{Logic Bugs}} &  \multicolumn{2}{c}{\textbf{Crash bugs}} & \\
      \textbf{\SDBMS{}} & \small{Fixed} & \small{Confirmed} & \small{Fixed} & \small{Confirmed} & \small{Sum} \\
    \midrule
    \GEOS     
    & \DTLfetch{type}{type}{GEOS}{LogicFixed} 
    & \DTLfetch{type}{type}{GEOS}{LogicConfirmed} 
    & \DTLfetch{type}{type}{GEOS}{CrashFixed} 
    & \DTLfetch{type}{type}{GEOS}{CrashConfirmed} 
    & \DTLfetch{type}{type}{GEOS}{Sum} \\
    \PostGIS     
    & \DTLfetch{type}{type}{PostGIS}{LogicFixed} 
    & \DTLfetch{type}{type}{PostGIS}{LogicConfirmed} 
    & \DTLfetch{type}{type}{PostGIS}{CrashFixed} 
    & \DTLfetch{type}{type}{PostGIS}{CrashConfirmed} 
    & \DTLfetch{type}{type}{PostGIS}{Sum} \\
    \MySQLGIS       
    & \DTLfetch{type}{type}{MySQLGIS}{LogicFixed} 
    & \DTLfetch{type}{type}{MySQLGIS}{LogicConfirmed} 
    & \DTLfetch{type}{type}{MySQLGIS}{CrashFixed} 
    & \DTLfetch{type}{type}{MySQLGIS}{CrashConfirmed} 
    & \DTLfetch{type}{type}{MySQLGIS}{Sum} \\
    \DuckDBSpatial      
    & \DTLfetch{type}{type}{DuckDBSpatial}{LogicFixed} 
    & \DTLfetch{type}{type}{DuckDBSpatial}{LogicConfirmed} 
    & \DTLfetch{type}{type}{DuckDBSpatial}{CrashFixed} 
    & \DTLfetch{type}{type}{DuckDBSpatial}{CrashConfirmed} 
    & \DTLfetch{type}{type}{DuckDBSpatial}{Sum} \\
    \midrule
    Sum
    & \DTLfetch{type}{type}{Sum}{LogicFixed} 
    & \DTLfetch{type}{type}{Sum}{LogicConfirmed} 
    & \DTLfetch{type}{type}{Sum}{CrashFixed} 
    & \DTLfetch{type}{type}{Sum}{CrashConfirmed} 
    & \DTLfetch{type}{type}{Sum}{Sum} \\
    \bottomrule
\end{tabular}

\label{tab:bugs-types}
\end{table}
}

\paragraph{Impact of \ToolName{}.}
Developer feedback is an important indicator of whether the testing work has a positive impact in practice. 
We received much positive feedback from  \PostGIS{} developers.
They noticed our efforts and reached out to us via email: 
\emph{"Congratulations on your thorough work in this area - it benefits all uses of PostGIS and GEOS!"}
They emphasized the significant contribution of our approach in detecting logic bugs: 
\emph{"I'm aware of some work around automated fuzzing tests for GEOS, but typically they just involve obscure input causing crashes, not simple test cases producing wrong answers."}
In fixing one of the \PostGIS{} bugs,
a developer noted that a test case revealed an incorrect definition. 
They acknowledged, \emph{"it has become clear to me that our definition of \lstinline{ST_DFullyWithin} is probably “wrong”, that is, it’s not what people think they are getting when they call it."}
From \DuckDBSpatial{}, we received numerous appreciations from one of the core developers, saying, \emph{"Thanks for reporting this issue!"} 
All the bugs were resolved and 4 of them were fixed within 24 hours. 
From \MySQLGIS{}, our reported bugs were confirmed by a testing developer within one day, 
one of which was marked as \emph{serious}. 
Besides, a bug was fixed and included in the subsequent release version.

\subsection{Illustrative Examples\label{sec:case-study}}
In this section, we discuss selected bugs to illustrate why \APPROACHAKA{} can detect them, whereas differential testing cannot.
We then present our results from manually analyzing all the bugs detected by \APPROACHAKA{} and determine how many of these bugs could also be identified by differential testing or TLP.
Lastly, we identify common patterns that induce bugs to provide insights for implementing \SDBMS{}s.

\paragraph{Selected bugs.} We selected a few examples of logic bugs identified by \ToolName{} to illustrate the effectiveness of \APPROACHAKA{} in bug detection, 
evidenced by the diversity of detected bugs.
The variety in their root causes further underscores this diversity.
For conciseness, we present only simplified test cases that highlight the underlying core issues, rather than the original test cases used to uncover these bugs.

\paragraph{\textbf{Incorrect result after scaling in \MySQLGIS{}}}
Listing~\ref{lst:scaling} illustrates a logic bug related to an incorrect relation calculation. \MySQLGIS{} incorrectly determines that \sqlinline{@g1} crosses \sqlinline{@g2}, 
violating the condition that 
\emph{"their intersection is not equal to either of the two given geometries"}. 
The results before and after scaling demonstrate this inconsistency. 
After identifying that the result containing an \emph{EMPTY} element was incorrect, we reported it to the \MySQLGIS{} community.

The above bug is a logic bug that is difficult to detect by differential testing,
as it is concealed within expected discrepancies.
The definition of the function \sqlinline{ST_Crosses} varies among different \SDBMS{}s, leading to expected discrepancies.

\begin{figure}[t]
\lstset{style=sqlstyle}
\begin{lstlisting}[caption={\MySQLGIS{} calculates an incorrect relation after scaling coordinates by a factor of 10.}, label={lst:scaling}]
SET @g1='MULTILINESTRING((99`\bugPattern{0}` 28`\bugPattern{0}`,10`\bugPattern{0}` 2`\bugPattern{0}`))';
SET @g2='GEOMETRYCOLLECTION(MULTILINESTRING((99`\bugPattern{0}` 28`\bugPattern{0}`,    10`\bugPattern{0}` 2`\bugPattern{0}`)),POLYGON((36`\bugPattern{0}` 6`\bugPattern{0}`,85`\bugPattern{0}` 62`\bugPattern{0}`,85`\bugPattern{0}` 42`\bugPattern{0}`,36`\bugPattern{0}` 6`\bugPattern{0}`)))';
SELECT ST_Crosses(ST_GeomFromText(@g1), ST_GeomFromText(@g2));
-- {1} `\faBug{}` {0} `\faCheckCircle{}` 
\end{lstlisting}
\end{figure}

\paragraph{\textbf{Incorrect result after swapping axes in \MySQLGIS{}}}
Listing~\ref{lst:overlaps} shows a confirmed logic bug in \MySQLGIS{}. 
The Function \sqlinline{ST_Overlaps(g1, g2)}  returns \sqlinline{1} if \sqlinline{g1} and \sqlinline{g2} intersect and their intersection results in geometry of the same dimension but not equal to either \sqlinline{g1} or \sqlinline{g2}; otherwise, it returns \sqlinline{0}.
In this case, since the intersection of \sqlinline{g1} and \sqlinline{g2} is equal to \sqlinline{g1}, 
the expected result of \sqlinline{ST_Overlaps(g1, g2)} should be \sqlinline{0}.

This logic bug cannot be detected by differential testing.
\PostGIS{} and \DuckDBSpatial{} consider \sqlinline{g2} invalid, because two elements of \sqlinline{g2} intersect, resulting in a self-intersection error.
This serves as an example of how different data designs can limit the effectiveness of differential testing approaches.

\begin{figure}[t]
\lstset{style=sqlstyle}
\begin{lstlisting}[caption={\MySQLGIS{} identifies an incorrect relation after swapping the X and Y axes.}, label={lst:overlaps}]
SET @g1 = ST_GeomFromText('POLYGON((614 445,30 26,80 30,614 445))');
SET @g2 = ST_GeomFromText('GEOMETRYCOLLECTION(POLYGON((614 445,30 26,80 30,614 445)),POLYGON((190 1010,40 90,90 40,190 1010)))');
SELECT ST_Overlaps(@g2, @g1);
-- {0} `\faCheckCircle{}`
SELECT ST_Overlaps(ST_SwapXY(@g2), ST_SwapXY(@g1));
-- {1} `\faBug{}` {0} `\faCheckCircle{}`
\end{lstlisting}
\end{figure}

\paragraph{\textbf{EMPTY-related bugs in \PostGIS{}}} 
Listing~\ref{lst:distance} shows a fixed logic bug in \PostGIS{}. 
\sqlinline{ST_Distance} calculates the distance between the given two geometries \sqlinline{g1} and \sqlinline{g2}. 
The function returns the minimum distance if the given geometries are \emph{MULTI geometries}. 
Thus, the expected result is $2$. 
However, \PostGIS{} returns $3$ incorrectly.
The developer repaired the incorrect recursive logic after we reported it.
This bug was found by \emph{canonicalization}.

This example shows a logic bug that can be missed by comparing \SDBMS{}s with and without supporting \sqlinline{EMPTY} elements.

\ignorelst{}{
\begin{figure}[t]
\lstset{style=sqlstyle}
\begin{lstlisting}[caption={\PostGIS{} calculates a distance incorrectly.}, label={lst:distance}]
SELECT ST_Distance('MULTIPOINT((1 0),(0 0))'::geometry,        'MULTIPOINT((-2 0), `\bugPattern{EMPTY}`)'::geometry); `\label{line:distance-origin}`
-- {3} `\faBug{}`  {2} `\faCheckCircle{}`
SELECT ST_Distance('MULTIPOINT((1 0),(0 0))'::geometry,        'POINT(-2 0)'::geometry); `\label{line:distance-AEI}`
-- {2} `\faCheckCircle{}` 
\end{lstlisting}
\end{figure}
}

\paragraph{\textbf{Incorrect strategy in computing boundary of the MIXED geometry in \GEOS{}}}
Listing~\ref{lst:boundary} demonstrates a logic bug
caused by undefined behavior when evaluating \emph{MIXED geometry} boundaries.
According to the definition of \sqlinline{ST_Within}, 
a geometry is within another only if their interiors share at least one point. 
Both interiors of \sqlinline{g1} and \sqlinline{g2} contain \sqlinline{POINT(0 0)}, thus, the expected result is that \sqlinline{g1} is within \sqlinline{g2}.
However, \PostGIS{} incorrectly judges the relation and returns false.
This was detected by canonicalization. Replacing g2 with \sqlinline{GEOMETRYCOLLECTION(LINESTRING(0 0,1 0),POINT(0 0))} yields an inconsistent result.
A PostGIS developer in \PostGIS{} confirmed it as an upstream bug and reported it to the \GEOS{} community.
The \GEOS{} developers' feedback indicated this bug was caused by a \emph{"last-one-wins"} strategy, which dictates that a point serves as the boundary for a \sqlinline{GEOMETRYCOLLECTION} if it is the boundary of the last geometry.
In this case, \sqlinline{POINT(0 0)}, the shared interior, is considered part of the boundary of \sqlinline{g2} rather than its interior.
Developers considered adopting boundary-priority semantics to address it. 

It is a logic bug that can be detected by differential testing, as \PostGIS{} and \MySQLGIS{} output different results. 
However, when comparing the behavior of \PostGIS{} and \DuckDBSpatial{}, the bug  would be missed, as they both return the same incorrect results.

\begin{figure}[t]
\lstset{style=sqlstyle}
\begin{lstlisting}[caption={\PostGIS{} misjudges the relationship between \sqlinline{g1} and \sqlinline{g2}.}, label={lst:boundary}]
SELECT ST_Within(g1,g2) FROM (SELECT 'POINT(0 0)'::geometry As g1, 'GEOMETRYCOLLECTION(POINT(0 0),LINESTRING(0 0,1 0))'::geometry As g2);
-- {f} `\faBug{}` {t} `\faCheckCircle{}`
\end{lstlisting}
\end{figure}

\paragraph{\textbf{A logic bug in prepared geometry of \GEOS{}}}
Listing~\ref{lst:prepared} demonstrates a logic bug where the ordered id pair \sqlinline{(3,2)} is missed in the results.
\sqlinline{ST_Contains(g1, g2)} is defined to return true when \sqlinline{g1} contains \sqlinline{g2} if, and only if, all points of \sqlinline{g2} lie inside and the interiors of \sqlinline{g1} and \sqlinline{g2} share at least one point.
Since the points in the second geometry lie within the third geometry, 
and \sqlinline{POINT(3 1)} is within the interior of the third geometry, 
the third geometry should contain the second geometry.
It is noticed that the first and second geometry are the same.
However, \PostGIS{} reports the pair of (3,1), 
instead of (3,1;3,2). 
A \PostGIS{} developer identified it as an upstream bug from \GEOS{}.
Feedback from \GEOS{} developers indicated that the issue was within \emph{"prepared geometry"}, 
a component designed to speed up various topological relationship functions.
The bug was fixed within one day of confirmation.
One of the developers believes this inconsistency is widespread in \PostGIS{}, stating, \emph{“as a general proposition, every prepared variant should return the same as the non-prepared variant; I imagine there's a lot of possible issues hiding in that proposition.”}

It is a logic bug that can be detected by differential testing, as both \MySQLGIS{} and \DuckDBSpatial{} produce the correct results.

\begin{figure}[t]
\lstset{style=sqlstyle}
\begin{lstlisting}[caption={\PostGIS{} misses one of the pairs.}, label={lst:prepared}]
CREATE table t (id int, geom geometry);
INSERT INTO t (id, geom) VALUES 
    (1,'GEOMETRYCOLLECTION(MULTIPOINT((0 0),(3 1)))'::geometry),
    (2,'GEOMETRYCOLLECTION(MULTIPOINT((0 0),(3 1)))'::geometry),
    (3,'MULTIPOLYGON(((0 0,5 0,0 5,0 0)))'::geometry);
SELECT a1.id, a2.id FROM t As a1, t As a2 WHERE ST_Contains(a1.geom, a2.geom);
-- {1,1; 1,2; 2,1; 2,2; 3,1; 3,3} `\faBug{}` 
-- {1,1; 1,2; 2,1; 2,2; 3,1; `\bugPattern{3,2}`; 3,3} `\faCheckCircle{}`
\end{lstlisting}
\end{figure}

\paragraph{\textbf{A logic bug in the \PostGIS{} engine.}} Listing~\ref{lst:gist} shows a fixed logic bug related to a GIST index. 
The correct result should be 1, but \PostGIS{} incorrectly returns 0.  
We detected 7 bugs in the \PostGIS{} engine, highlighting the diversity of detected bugs. 

\ignorelst{}{
\begin{figure}[t]
\lstset{style=sqlstyle}
\begin{lstlisting}[caption={\PostGIS{} engine misses 2 pairs.}, label={lst:gist}]
CREATE TABLE t AS SELECT 1 AS id, 'POINT EMPTY'::geometry AS geom;
CREATE INDEX idx ON t USING GIST (geom);
SET enable_seqscan = false;
SELECT COUNT(*) FROM t WHERE geom ~= 'POINT EMPTY'::geometry;
-- {0} `\faBug{}` {1} `\faCheckCircle{}`
\end{lstlisting}
\end{figure}
}

\paragraph{\textbf{A bug in the RANGE functionality of \PostGIS{}}} Listing~\ref{lst:dfullywithin} shows a bug in the functionality of \sqlinline{ST_DFullyWithin} in \PostGIS{}. 
The function returns true if the geometries are entirely within the specified distance of one another.
The geometry a1 is fully within each point of a2 in the distance of 100, thus, it should return true.
A \PostGIS{} developer  pointed out, \textit{“It has become clear to me that our definition of \sqlinline{ST_DFullyWithin} is probably ‘wrong’, that is, it’s not what people think they are getting when they call it.”} 

\ignorelst{}{
\begin{figure}[t]
\lstset{style=sqlstyle}
\begin{lstlisting}[caption={A RANGE functionality fails in \PostGIS{}.}, label={lst:dfullywithin}]
SELECT ST_DFullyWithin('LINESTRING(0 0,0 1,1 0,0 0)'::geometry,'POLYGON((0 0,0 1,1 0,0 0))'::geometry,100);
-- {f} `\faBug{}` {t} `\faCheckCircle{}`
\end{lstlisting}
\end{figure}
}

\paragraph{\textbf{Patterns of inducing cases.}}
We also identified common bug-inducing patterns, classifying them under \emph{EMPTY} and \emph{MIXED geometry} categories, providing a taxonomy of bugs based on trigger case patterns.
Among all \DTLfetch{type}{type}{Logic}{Sum} logic bugs, \EMPTYPATTERN{} can be triggered by test cases containing \emph{EMPTY} elements or geometries.
These types of bugs are typically fixed within one day because pinpointing the root cause (\emph{i.e.}, empty processor) is usually easier than with other patterns.
Besides, \MIXEDPATTERN{} bugs are related to the \emph{MIXED geometry}, 
caused by various factors. 
Excluding 4 bugs whose causes were not detailed in the feedback of \MySQLGIS{} developers, 
we summarize the causes as follows.
4 logic bugs stem from errors in processing \emph{EMPTY} elements. 
Another 3 logic bugs are due to dimension processors incorrectly identifying the dimensions of \emph{MIXED geometry}.
Additionally, we detected 2 logic bugs in the \emph{"prepared geometry"}, a component for optimization in \PostGIS{}. 
Moreover, the root cause of Listing~\ref{lst:boundary} is associated with the boundary processor of \emph{MIXED geometry}.

\subsection{\textbf{Comparison to the State of the Art}}
To the best of our knowledge, \ToolName{} is the first general testing approach and tool that aims to detect logic bugs for \SDBMS{}s.
Despite this, we undertook additional manual analyses of the logic bugs detected by \APPROACHAKA{}, to ascertain if they could have been identified by previous methodologies. 
For this purpose, we employed differential testing approaches alongside TLP~\cite{10.1145/3428279}, a state-of-the-art methodology for testing relational \DBMS{}s. 
In comparing different \SDBMS{}s, we chose \PostGIS{} and \DuckDBSpatial{} to illustrate a comparison between similar systems, and \PostGIS{} with \MySQLGIS{} to demonstrate a comparison between more distinct systems.
Aside from comparing different \SDBMS{}s, we also conducted differential testing by toggling an index on and off, indicated as an approach \emph{Index}.
All \DTLfetch{type}{type}{Logic}{Sum} confirmed logic bugs detected by \APPROACHAKA{} were manually analyzed to determine their detectability by previous methodologies.
When comparing different \SDBMS{}s, we assessed the potential for bug detection by examining two factors in each report: 
(1) the functions or data in the bug-inducing case are applicable for the compared \SDBMS{}s;
(2) the bug-inducing case produced different results in the compared \SDBMS{}s.
Regarding \emph{Index}, we analyzed the impact of index presence on each bug-inducing case within a specific \SDBMS{}.
For TLP, we decomposed each bug-inducing query into three partitioning queries and examined the results; unexpected outcomes signified TLP's potential for bug detection. 

Table~\ref{tab:oracles} shows the results of logic bug detection by different methods.
Of the \DTLfetch{type}{type}{Logic}{Sum} confirmed or fixed  logic bugs, \OVERLOOKED{} were overlooked by all other methods. 
4 logic bugs could be detected by comparing \PostGIS{} and \MySQLGIS{}; however, such differential testing suffers from false alarms, due to differing function definitions.
All logic bugs were missed when comparing \PostGIS{} and \DuckDBSpatial{}, 
which is expected, given their similarity.
Two index-related bugs could be found, both theoretically detectable by the \emph{Index} method.
However, applying the \emph{Index} method heavily depends on the test case design, specifically designed to make frequent use of the index, thereby enabling more efficient application of this method. 
TLP detected one index-related bug, but missed the other, as expected, since TLP is designed for relational \DBMS{} and lacks awareness of spatial relationships.

{\small
\begin{table}[t]
\centering
\captionsetup{font={small}}
\caption{Logic bugs detection comparison.
\emph{P. vs. M.} compares the results of \PostGIS{} and \MySQLGIS{}; \emph{P. vs. D.} compares the outcomes of \PostGIS{} and \DuckDBSpatial{}. \emph{Index} compares the outcomes of \SDBMS{}s with and without index. \emph{TLP} is a state-of-the-art methodology for testing relational \DBMS{}s~\cite{10.1145/3428279}.
}
\begin{tabular}{lrrrrr}
    \toprule
      & \textbf{\APPROACHAKA{}} &  
      \textbf{\emph{P. vs. M.}} & 
      \textbf{\emph{P. vs. D.}} &
      \textbf{\emph{Index}} & \textbf{\emph{TLP}}\\
    \midrule  
    \GEOS     &  8 & 3 & 1 & 0 & 0 \\
    \PostGIS  &  8 & 0 & 0 & 1 & 1 \\ 
    \MySQLGIS &  4 & 1 & 0 & 1 & 0 \\      
    \midrule  
    Sum       & 20 & 4 & 1 & 2 & 1\\
    \bottomrule
\end{tabular}

\label{tab:oracles}
\end{table}
}

\subsection{Efficiency of \ToolName{}}

\paragraph{\textbf{Run time distribution}\label{sec:time-distribution}}
We evaluated whether the bottleneck of \ToolName{} lies in the execution time of the \SDBMS{} or in our test case generation. We varied a set of parameters, $N$, which controls the number of geometries in each run, to assist users in selecting an appropriate quantity of geometries for use in \ToolName{}.

\paragraph{Methodology.}
We varied geometry generation per run by an argument, setting $N$---the number of geometries per group---at 1, 10, 50, and 100 as configuration parameters.  
We configured \ToolName{} to generate 100 random queries in each run.
Each experiment was repeated 10 times to accommodate potential performance deviations.
Subsequently, we recorded the execution time of statements in the targeted \SDBMS{} and the total time of \ToolName{} including the run time of the \SDBMS{}.

\paragraph{Results.}
Figure~\ref{fig:efficiency_study} shows the average time spent in each configuration.
Note that the execution time of \ToolName{} includes also the time spent within the SDBMS.
The proportion of statement execution time exceeds those of \ToolName{} by 90\% when $N$ is greater than 10 in \PostGIS{} and \MySQLGIS{}.
In all configurations for \DuckDBSpatial{}, the execution time of statements accounted for more than 90\%.
Besides, in all the \SDBMS{}s,
as $N$ increases, the total runtime of \ToolName{} also increases. 
This trend is particularly notable in \DuckDBSpatial{} and \MySQLGIS{}.
When $N$ is 100, the runtime is 20 times longer than when $N$ is 10.
The results indicate that the execution time spent in the \SDBMS{} dominates the overall execution costs.
This observation has been previously noted in testing approaches for relational \DBMS{}s~\cite{10.5555/3488766.3488804}, but it had not been systematically evaluated before.
Moreover, the number of geometries impacts the performance of \ToolName{} indirectly, by increasing the execution time within the \SDBMS{}, aiding users in choosing an appropriate quantity of geometries when testing.

\pgfplotstableread[row sep=\\,col sep=&]{
	datasets & PostGIS & MySQL GIS & DuckDB Spatial \\
	2 & 47    & 89  & 165  \\
	4 & 235 & 251  & 292 \\
	6 & 586 & 1675  & 1922 \\
	8 & 1553   & 4984 & 5803 \\
}\TSPATTER

\pgfplotstableread[row sep=\\,col sep=&]{
	datasets & PostGIS & MySQL GIS & DuckDB Spatial \\
	2 & 40    & 67  & 158  \\
	4 & 225 & 224  & 283 \\
	6 & 572 & 1619  & 1911 \\
	8 & 1532   & 4889 & 5789 \\
}\TDBMS

\begin{figure}[t!]
    \centering
    \subfigure[PostGIS]{
        \begin{tikzpicture}[scale=0.5]
            \tikzstyle{pattern1} = [black, fill=p1, postaction={pattern=north east lines}]
            \tikzstyle{pattern2} = [black, fill=p4, postaction={pattern=horizontal lines}]
            \begin{axis}[
                grid = major,
                ybar=0.11pt,
                bar width=0.4cm,
                width=0.4\textwidth,
                height=0.3\textwidth,
                xlabel={\LARGE \bf $N$: Number of geometries}, 
                xtick=data,	xticklabels={1, 10, 50, 100},
                 legend style={at={(0.5,1.40)}, anchor=north,legend columns=-1,draw=none},
                       legend image code/.code={
                \draw [#1] (0cm,-0.263cm) rectangle (0.4cm,0.15cm); },
                xmin=0.8,xmax=9.2,
                    ymin=9, ymax = 10000,
                     ytick = {10, 100, 1000, 10000},
            yticklabels = {$10^{1}$, $10^{2}$, $10^3$, $10^4$},
                ymode = log,    
                    log basis y={2},
                    log origin=infty,
                tick align=inside,
                ticklabel style={font=\LARGE},
                every axis plot/.append style={line width = 1.6pt},
                every axis/.append style={line width = 1.6pt},
                    ylabel={\textbf{\LARGE time (ms)}},
                ]
                \addplot[pattern1] table[x=datasets,y=PostGIS]{\TSPATTER};
                \addplot[pattern2] table[x=datasets,y=PostGIS]{\TDBMS};
            \legend{\LARGE {\tt \bf SPATTER $\ $},\LARGE {\tt \bf SDBMS $\ $}}
                \end{axis}
        \end{tikzpicture}
    }
    \subfigure[MySQL GIS]{
        \begin{tikzpicture}[scale=0.5]
            \tikzstyle{pattern1} = [black, fill=p1, postaction={pattern=north east lines}]
            \tikzstyle{pattern2} = [black, fill=p4, postaction={pattern=horizontal lines}]
            \begin{axis}[
                grid = major,
                ybar=0.11pt,
                bar width=0.4cm,
                width=0.4\textwidth,
                height=0.3\textwidth,
                xlabel={\LARGE \bf $N$: Number of geometries}, 
                xtick=data,	xticklabels={1, 10, 50, 100},
                 legend style={at={(0.5,1.40)}, anchor=north,legend columns=-1,draw=none},
                       legend image code/.code={
                \draw [#1] (0cm,-0.263cm) rectangle (0.4cm,0.15cm); },
                xmin=0.8,xmax=9.2,
                    ymin=9, ymax = 10000,
                     ytick = {10, 100, 1000, 10000},
            yticklabels = {$10^{1}$, $10^{2}$, $10^3$, $10^4$},
                ymode = log,    
                    log basis y={2},
                    log origin=infty,
                tick align=inside,
                ticklabel style={font=\LARGE},
                every axis plot/.append style={line width = 1.6pt},
                every axis/.append style={line width = 1.6pt},
                    ylabel={\textbf{\LARGE time (ms)}},
                ]
                \addplot[pattern1] table[x=datasets,y=MySQL GIS]{\TSPATTER};
                \addplot[pattern2] table[x=datasets,y=MySQL GIS]{\TDBMS};
            \legend{\LARGE {\tt \bf SPATTER $\ $},\LARGE {\tt \bf SDBMS $\ $}}
                \end{axis}
        \end{tikzpicture}
    }
    \subfigure[DuckDB Spatial]{
        \begin{tikzpicture}[scale=0.5]
            \tikzstyle{pattern1} = [black, fill=p1, postaction={pattern=north east lines}]
            \tikzstyle{pattern2} = [black, fill=p4, postaction={pattern=horizontal lines}]
            \begin{axis}[
                grid = major,
                ybar=0.11pt,
                bar width=0.4cm,
                width=0.4\textwidth,
                height=0.3\textwidth,
                xlabel={\LARGE \bf $N$: Number of geometries}, 
                xtick=data,	xticklabels={1, 10, 50, 100},
                 legend style={at={(0.5,1.40)}, anchor=north,legend columns=-1,draw=none},
                       legend image code/.code={
                \draw [#1] (0cm,-0.263cm) rectangle (0.4cm,0.15cm); },
                xmin=0.8,xmax=9.2,
                    ymin=9, ymax = 10000,
                     ytick = {10, 100, 1000, 10000},
            yticklabels = {$10^{1}$, $10^{2}$, $10^3$, $10^4$},
                ymode = log,    
                    log basis y={2},
                    log origin=infty,
                tick align=inside,
                ticklabel style={font=\LARGE},
                every axis plot/.append style={line width = 1.6pt},
                every axis/.append style={line width = 1.6pt},
                    ylabel={\textbf{\LARGE time (ms)}},
                ]
                \addplot[pattern1] table[x=datasets,y=DuckDB Spatial]{\TSPATTER};
                \addplot[pattern2] table[x=datasets,y=DuckDB Spatial]{\TDBMS};
            \legend{\LARGE {\tt \bf SPATTER $\ $},\LARGE {\tt \bf SDBMS $\ $}}
                \end{axis}
        \end{tikzpicture}
    }
    \captionsetup{font={small}}
    \caption{Average time in \ToolName{} and the \SDBMS{}s across 10 runs. \label{fig:efficiency_study} }
\end{figure}

\paragraph{\textbf{Code coverage}}
We sought to evaluate the code coverage of \ToolName{}, as it provides insights into the components of the \SDBMS{}s are tested.
\paragraph{Methodology.} 
We ran \ToolName{} on \PostGIS{} for over 24 hours, collecting line coverage data for both \PostGIS{} and \GEOS{}. 
As our testing target, we selected \PostGIS{} 3.4.3 and \GEOS{} 3.12, the versions we used when we began our testing campaign.
The reason for selecting \PostGIS{} as a representative is that we found most bugs in \PostGIS{} and its third-party library \GEOS{}.
We collected coverage data after executing its unit tests, as this allowed us to determine whether \ToolName{} had an additional  contribution to coverage on top of unit tests.
In the configuration of \emph{"Unit Tests + \ToolName{}"}, we ran \ToolName{} for 24 hours after executing all \PostGIS{} unit tests.
It is noticed that the unit tests are from \PostGIS{}, not \GEOS{}, since \GEOS{}'s unit tests cover functions (\emph{e.g.}, distance computation) that \PostGIS{} does not call.
During our testing campaign, \PostGIS{} and \GEOS{} developers incorporated our reported cases into their regression unit tests upon bug resolution.
Therefore, we selected a specific commit before our testing campaign.

\paragraph{Consequences.}
Table~\ref{tab:coverage} presents the coverage results in various settings. 
A line coverage of less than 20\% in the \PostGIS{} module might seem low, but this outcome was expected. 
The primary reason for low coverage is that 
spatial data engines encompass more components than just geometry relation processing. 
For example, in the \PostGIS{} module, 
around 15\% of the code pertains to I/O-related functions supporting various formats 
(\emph{e.g.}, GeoJSON, a JSON format describing geometry or geography), 
and 8\% of uncovered code is related to geography.
In the \GEOS{} module, the code coverage of around 20\% was also expected.
Approximately 18\% of the code, which is related to geometry operations (around 20\% of \GEOS{}), remained uncovered, because we deliberately exercised caution in calling geometry operations within the \SMARTGENERATOR{} to prevent precision issues. 
\ToolName{} is designed to detect bugs in queries concerning topological relationships rather than spatial analysis or operation functions, resulting in low function coverage.
However, coverage increased after running Spatter based on unit tests. 
Specifically, in \PostGIS{} and \GEOS{}, 
an additional 206 and 178 lines of code, respectively, were covered, indicating that \ToolName{} provides supplementary coverage beyond that of unit tests. 
These results are expected, as most logic bugs tend to occur within a limited number of code lines~\cite{10.1145/3428261}.

{\small
\begin{table}
\centering
\captionsetup{font={small}}
\caption{Code coverage of the systems tested over 24 hours.}
\begin{tabular}{lrrrr}
    \toprule
    \multirow{2}{*}{Approach} & \multicolumn{2}{c}{\PostGIS{}} & \multicolumn{2}{c}{\GEOS{}}  \\
    & Line & Function & Line & Function \\
    \midrule
    \ToolName{}                 & 15.8\% & 13.9\% & 20.1\% & 18.8\% \\
    Unit Tests                  & 79.5\% & 76.7\% & 54.8\% & 56.2\% \\
    Unit Tests + \ToolName{}    & 79.9\% & 76.8\% & 55.2\% & 56.7\% \\
    \bottomrule
\end{tabular}

\label{tab:coverage}
\end{table}
}

\paragraph{\textbf{Self-constructed baseline}}
We constructed our own baseline to evaluate the efficiency of the \genbasedstrgy{}, which, to the best of our knowledge, represents the first automated test case generation tool for \SDBMS{}s. Our baseline generates geometry based solely on the \genbasedstrgy{}, providing a comparison point to assess the efficiency of the \SMARTGENERATOR{}.

\paragraph{Methodology.}
We selected \PostGIS{} version 3.4 as our evaluation target.
The reason for selecting \PostGIS{} was that it had the highest number of bug fixes, 
allowing us to automatically determine the number of unique bugs we found during a testing campaign.
To identify unique bugs over time, we performed the following steps.
First, we ran \ToolName{} on specific older versions for one hour to collect bug-inducing cases and their trigger timestamps.
Second, for each bug-inducing test case,
we determined whether the bug was fixed by updating \PostGIS{} and \GEOS{} to their latest versions, respectively, and checking the consistency of the test case results.
If a test case was identified as fixed, we concurrently identified the module from which the bug originated.
Then, we conducted a binary search for the fix commits in the commits of \PostGIS{} or \GEOS{}. 
Finally, based on the commit and generation time of each fixed case, we determined the earliest detection time for each fixed bug during the experiment. This approach is a best-effort technique; for instance, a single fix commit might address multiple bugs. We configured each test to run for an hour. Meanwhile, we collected coverage data for \PostGIS{} and \GEOS{} over time.

\paragraph{Results.}
Our testing resulted in 2,366 and 9,913 cases potentially triggering bugs under configurations with and without the \derivativestrgy{}, respectively. 
As expected, most of them were duplicated.
We applied the deduplication technique described in our methodology to address this issue. 
As Figure~\ref{fig:ablation} shows, the generator with our proposed strategy, the \derivativestrgy{}, significantly outperformed the generator only using the \genbasedstrgy{}.
Within one hour, the \SMARTGENERATOR{} detected a higher number of unique bugs, since the \derivativestrgy{}, which generates more complex topological relationships, rendered the query results more significant.
Furthermore, the \SMARTGENERATOR{} achieved higher coverage in both \PostGIS{} and \GEOS{}, which aligns with our expectations, given that the \derivativestrgy{} exploits spatial functions inherent in \PostGIS{}.
The findings underscore that the \derivativestrgy{}, which creates new geometries based on the existing geometries, enhances 
testing efficiency.

\pgfplotstableread[row sep=\\,col sep=&]{
 BugN &   TS \\
	0 &    0 \\
	0 &    1 \\
	1 &    1 \\
	1 &   77 \\
	2 &   87 \\
	2 &   85 \\
	3 &   95 \\
	3 &  236 \\
	4 &  246 \\
	4 &  497 \\
	5 &  507 \\
    5 &  609 \\
    6 &  619 \\
    6 & 1522 \\
    7 & 1532 \\
    7 & 3600 \\
}\BugsSmart

\pgfplotstableread[row sep=\\,col sep=&]{
 BugN &   TR  \\
	0 &    0 \\
	0 &   10 \\
	1 &   15 \\
	1 &   90 \\
	2 &   95 \\
	2 &  540 \\
	3 &  547 \\
	3 & 3600 \\
}\BugsRandom

\pgfplotstableread[row sep=\\,col sep=&]{
	t &   PS &   PR &   GS &   GR  \\
	0 & 15.1 & 11.4 & 17.2 & 14.0  \\
	2 & 15.4 & 12.4 & 18.7 & 15.4 \\
	4 & 15.4 & 12.7 & 18.8 & 15.5 \\
	6 & 15.5 & 12.8 & 18.9 & 15.6 \\
}\TCoverage

\begin{figure}
    \centering
    \subfigure[Unique Bugs]{
		\begin{tikzpicture}[scale=0.8]
            \centering
            \begin{axis}[
                legend style={at={(0.5,1.35)}, anchor=north,legend columns=-1,draw=none},
				width=0.42\textwidth,
                height=0.26\textwidth,
                xtick = {0,1200,2400,3600},
                xticklabels = {0,20,40,60},
                ymax=8,
				ymin=-1,
				ytick={1,3,5,7},
				mark size=3.0pt, 
				ylabel={\normalsize \bf Bugs (\#)},
				xlabel={\normalsize \bf time (mins)}, 
                ylabel style={yshift=0pt},
                xlabel style={yshift=0pt},
				ticklabel style={font=\normalsize},
				every axis plot/.append style={line width = 1.5pt
                },
				every axis/.append style={line width = 1.5pt},
            ]
            
            \addplot [mark=+,color=c8] table[x=TS,y=BugN]{\BugsSmart};
			\addplot [mark=x,color=c2] table[x=TR,y=BugN]{\BugsRandom};
            \legend{\large {\tt \SMARTGENERATORAKA{}}, \large{ \tt \RANDOMGENERATORAKA{}}};
            \end{axis}
        \end{tikzpicture}
	}
    \subfigure[PostGIS]{
        \begin{tikzpicture}[scale=0.5]
            \begin{axis}[
                legend style={at={(0.5,1.30)}, anchor=north,legend columns=-1,draw=none},
                width=0.4\textwidth, height=0.35\textwidth,
                xtick = {0,2,4,6},
                xticklabels = {0,20,40,60},
                ymax=16.5, ymin=9.5,
                ytick={10, 12, 14, 16},
                mark size=4.0pt, 
                ylabel={\huge \bf coverage (\%)},
                xlabel={\huge \bf time (mins)}, 
                ticklabel style={font=\huge},
                every axis plot/.append style={line width = 2.5pt},
                every axis/.append style={line width = 2.5pt},
            ]
            
            \addplot [mark=o,color=c8] table[x=t,y=PS]{\TCoverage};
            \addplot [mark=square,color=c2] table[x=t,y=PR]{\TCoverage};
            \legend{\huge {\tt \SMARTGENERATORAKA{}}, \huge{ \tt \RANDOMGENERATORAKA{}}};
            \end{axis}
        \end{tikzpicture}
    }
    \subfigure[GEOS]{
        \begin{tikzpicture}[scale=0.5]
            \begin{axis}[
                legend style={at={(0.5,1.30)}, anchor=north,legend columns=-1,draw=none},
                width=0.4\textwidth, height=0.35\textwidth,
                xtick = {0,2,4,6},
                xticklabels = {0,20,40,60},
                ymax=20, ymin=12,
                ytick={13, 15, 17, 19},
                mark size=4.0pt, 
                ylabel={\huge \bf coverage (\%)},
                xlabel={\huge \bf time (mins)}, 
                ticklabel style={font=\huge},
                every axis plot/.append style={line width = 2.5pt},
                every axis/.append style={line width = 2.5pt},
            ]
            
            \addplot [mark=o,color=c8] table[x=t,y=GS]{\TCoverage};
            \addplot [mark=square,color=c2] table[x=t,y=GR]{\TCoverage};
            \legend{\huge {\tt \SMARTGENERATORAKA{}}, \huge{ \tt \RANDOMGENERATORAKA{}}};
            \end{axis}
        \end{tikzpicture}
   }
    \captionsetup{font={small}}
    \caption{Ablation study of the \SMARTGENERATORUPPER{} (\SMARTGENERATORAKA{}). \RANDOMGENERATORAKA{} denotes a generator that employs only the \genbasedstrgy{}. Sub-figure (a) shows the number of unique bugs in \PostGIS{} identified under different configurations over one hour. Sub-figure (b) and (c) illustrate the corresponding line coverage of \PostGIS{} and \GEOS{} over time. \label{fig:smart-ablation} }
    \label{fig:ablation}
\end{figure}

\section{Related Work}

\paragraph{Testing DBMSs}
Recently, various techniques have been proposed to detect bugs in \DBMS{}s. 
For relational \DBMS{}s, researchers have proposed methods to identify logic bugs~\cite{10.1145/3551349.3556924, 10.14778/3357377.3357382, 10.1145/3428279, 10.5555/3488766.3488804, 10.1109/ICSE48619.2023.00174, 10.1109/ICSE48619.2023.00175}, transactional bugs~\cite{10.14778/3430915.3430918, 10.1145/3360591, 10.14778/3583140.3583145, 10.1145/3514221.3517851, 288568, 10.1145/3551349.3556924}, and performance issues~\cite{10.14778/3357377.3357382, cert, 10.1145/3510003.3510093}.
A notable example is SQLancer, which integrates several novel approaches to identify a range of bug types. 
TLP~\cite{10.1145/3428279}, starting from a given original query, derives multiple, more complex queries, each of which computes a partition of the result. CERT utilizes the property of cardinality estimation to efficiently detect performance bugs.
For  graph \DBMS{}s, 
established methodologies such as \emph{differential testing}~\cite{10.1145/3533767.3534409} and TLP~\cite{10.1145/3597926.3598044} have been utilized to uncover logic bugs.
Besides, methods leveraging graph properties have been proposed to specifically address logic bugs in graph query processing~\cite{jiang2023detecting, 10.14778/3636218.3636236, 10.1145/3597503.3639200}. 
However, these testing techniques largely depend on the properties of the specific type of \DBMS{}  under test and are not directly aimed at evaluating the core functionality of \SDBMS{}s, particularly geometry processing. 
To the best of our knowledge, \ToolName{} represents the first approach specifically designed for \SDBMS{}s, efficiently testing both the functionalities shared with other \DBMS{} types and those unique to \SDBMS{}s.

\sloppy{}
\paragraph{\SDBMS{}s}
Various approaches have focused on studying and improving the performance of \SDBMS{}s.
A multitude of spatial indexes have been proposed, including the R-tree~\cite{guttman1984r}, R*-tree~\cite{10.1145/93597.98741}, and machine learning-based indexes~\cite{ 10.1145/3318464.3389703, 10.1145/3588917}. In addition, various optimizations for spatial joins~\cite{10.1145/2463676.2463700, 10.1145/3588716, 10.1145/2723372.2749434, 10.1145/3347146.3359343} have been proposed. 
Other advancements include learned components within \SDBMS{}s~\cite{Pandey2020TheCF}, as well as end-to-end systems~\cite{10.1145/2882903.2915237}.
For a systematic study of modern \SDBMS{}s, we defer interested readers to a study by Pandey et al. on modern \SDBMS{}s and their performance on real-world datasets~\cite{10.14778/3236187.3236213}.

\section{Discussion}

\paragraph{The reason why \APPROACHAKA{} detected bugs} We analyzed why \APPROACHAKA{} detects bugs by examining their root causes. 
Given that we check for a high-level property, \APPROACHAKA{} can detect a diverse range of bugs. 
Commonly, the reason is that the original input exercises a different path in the program as compared to the transformed follow-up database. 
For example, the bug in Listing~\ref{lst:precision-issue} is located in a function that computes the direction of a point $p$ relative to a vector $v$.
There was a loss of precision in the normalization of vertices (\emph{i.e.}, displacement to the origin). 
The statements in Listing~\ref{lst:AEI_example} fail to trigger this bug, because a point in $v$ is at the origin, leading to no displacement calculations. 
Listing~\ref{lst:distance} is another example that illustrates how \emph{canonicalization} detected bugs. 
It was caused by incorrect logic when recursively processing MULTI geometry.
Line~\ref{line:distance-AEI} in Listing~\ref{lst:distance} fails to trigger this bug, because the recursive logic is not needed to handle the second geometry.
Overall, \APPROACHAKA{} can effectively detect bugs, because the inputs before and after transformation exercise different paths.

\paragraph{Supported \SDBMS-specific functionality} 
We support various complex kinds of \SDBMS-specific functionality. For JOIN functionality, all of our test cases use JOIN to combine spatial data, given that spatial join is one of the most important functionalities in \SDBMS{}s. 
For the RANGE functionality, \SDBMS{}s support functions like \sqlinline{ST_Within}, \sqlinline{ST_DWithin}, and \sqlinline{ST_DFullyWithin} for spatial range queries. 
We have tailored functionalities based on \APPROACHAKA{} to test these functions and have indeed found significant bugs (\emph{e.g.}, see Listing~\ref{lst:dfullywithin}). 
In general, \APPROACHAKA{} can effectively test functionalities highly related to the geometry in \SDBMS{}s, such as topological relationship queries.

\paragraph{Applicability to other kinds of database systems.} \APPROACHAKA{} is a general approach and might also be applicable to other kinds of database systems. 
For example, KNN algorithms allow nearest-neighbour searching and are commonly supported not only by geospatial database systems, but also by other systems, such as vector database systems. 
It is used for classification, dimensionality reduction, and content recommendation. 
While currently not supported in our implementation, testing for KNN algorithms using \APPROACHAKA{} could be implemented as long as no shearing is applied to the geometric objects by performing the following steps: 
(1) creating a database SDB1 by the \SMARTGENERATOR{};
(2) canonicalizing each geometry in SDB1 to its equivalent representation and then applying an affine transformation (rotate, translate, or scale) to construct a new geometry, resulting in SDB2; 
(3) checking that the KNN results retrieved from SDB1 and SDB2 should be equal, otherwise, a logic bug is detected, since rotating, translating, and scaling preserve relative distance. 
Shearing cannot be applied, as it does not preserve the relative-distance property, thus the results of KNN queries before and after shearing could be inconsistent. 

\paragraph{Limitations of \APPROACHAKA{}}
While \APPROACHAKA{} is effective in identifying certain types of logic bugs, it does not cover all possible scenarios. 
First, \APPROACHAKA{} is built on affine transformations that preserve geometric properties, which do not directly apply to geography types, since they represent curved objects. 
Second, \APPROACHAKA{} is not designed to comprehensively test components tangential to query processing, such as reading and conversion of files (\emph{e.g.}, implemented via the GDAL library). 
For instance, we detected a bug in GDAL by differential testing of the \SDBMS{}s instead of \APPROACHAKA{}. \DuckDBSpatial{} was expected to accept the GeoJSON \{"type": "Polygon","coordinates": []\} as a representation of \sqlinline{POLYGON EMPTY}, but, unexpectedly, the result was \sqlinline{NULL}. 
Furthermore, \APPROACHAKA{} is inherently unsuitable for applications involving non-linear transformations or non-affine geometric operations.
For example, \sqlinline{ST_Buffer} cannot be used after \emph{affine transformations} in \APPROACHAKA{} construction, as it fails to preserve topological relationships.

\paragraph{Future work} 
There are several future directions for this work. 
First, \APPROACHAKA{} could be applied to test other geometry libraries.
We found that 12 out of 21 confirmed bugs in \PostGIS{} were in the geometry library. 
Given that the underlying geometry libraries are prone to bugs, a future direction could be to apply \APPROACHAKA{} directly to these libraries. 
It would also help to further validate the robustness and effectiveness of \ToolName{} in a different geospatial ecosystem.
Besides, symbolic execution could enhance \APPROACHAKA{} by exploring more execution paths. 
Specifically, symbolic execution can generate test cases that explore different execution paths, with \APPROACHAKA{} providing expected results, thereby replacing the \SMARTGENERATOR{}.
Furthermore, the core idea of \APPROACHAKA{} that transforms a database while preserving essential properties can be generalized to testing other DBMSs, such as GDBMSs. 
For example, after shuffling the identities between all vertices in a graph database, the results of pattern matching should be still unchanged.

\section{Conclusion}
Logic bugs in \SDBMS{}s related to spatial patterns are a severe problem. In this work, we proposed \ToolName{}, which is based on a new methodology called \APPROACHNAME{} (\APPROACHAKA{}) and a \SMARTGENERATOR{} to effectively and efficiently detect logic bugs in \SDBMS{}s. Our evaluation of \ToolName{} on four widely-used \SDBMS{}s found \DTLfetch{status}{status}{Identified}{Sum} previously unknown, and unique bugs.
\DTLfetch{status}{status}{Real}{Sum} of them have been confirmed, 
and \DTLfetch{status}{status}{Sum}{Fixed} have already been fixed. 
Further bug analysis shows that 
\OVERLOOKED{} bugs are overlooked by the previous methodologies.
Additionally, the \derivativestrgy{} in the generating process improves testing efficiency by finding more unique bugs and achieving higher coverage. 
We believe that \ToolName{}, given its simplicity and generality, 
has a high chance of being integrated into the toolbox of \SDBMS{} developers.

\begin{acks}
This research was supported by a Ministry of Education (MOE) Academic Research Fund (AcRF) Tier 1 grant, and by an Amazon Research Award Fall 2023. Any opinions, findings, and conclusions or recommendations expressed in this material are those of the author(s) and do not reflect the views of Amazon.
This research was partly conducted while Wenjing Deng and Qiuyang Mang visited the National University of Singapore.
\end{acks}

\balance
\bibliographystyle{acm}
\bibliography{main}
\end{document}